% -*- TeX -*- -*- EN -*-

\documentclass{birkjour}

\frenchspacing
\usepackage{amssymb,amsfonts,amsmath,amsthm}
\usepackage{dsfont}
\newtheorem{thm}{Theorem}
\newtheorem{prop}[thm]{Proposition}
\newtheorem{lem}[thm]{Lemma}
\newtheorem{assum}{Assumption}

\DeclareMathOperator{\sgn}{sgn}

\DeclareMathOperator{\artanh}{artanh}
\newcommand{\wlim}{\mathrm{w}\!-\!\lim}
\newcommand{\<}{\langle}
\renewcommand{\>}{\rangle}
\newcommand{\fm}{\!-\!}

\newcommand{\Sc}{\mathcal{S}}

\newcommand{\F}{\mathcal{F}}

\newcommand{\B}{\mathcal{B}}
\newcommand{\C}{\mathcal{C}}
\newcommand{\Hc}{\mathcal{H}}
\newcommand{\K}{\mathcal{K}}

\newcommand{\N}{\mathcal{N}}
\newcommand{\M}{\mathcal{M}}
\newcommand{\V}{\mathcal{V}}
\newcommand{\as}{\mathrm{as}}
\newcommand{\reg}{\mathrm{reg}}
\newcommand{\sing}{\mathrm{sing}}

\newcommand{\al}{\alpha}
\newcommand{\be}{\beta}
\newcommand{\ka}{\kappa}
\newcommand{\ep}{\epsilon}
\newcommand{\ga}{\gamma}
\newcommand{\la}{\lambda}
\newcommand{\vp}{\varphi}

\newcommand{\mN}{\mathbb{N}}

\newcommand{\mZ}{\mathbb{Z}}
\newcommand{\dsp}{\displaystyle}
\newcommand{\txt}{\textstyle}
\newcommand{\con}{\mathrm{const}}
\newcommand{\ti}{\widetilde}
\newcommand{\w}{\omega}
\newcommand{\W}{\Omega}
\newcommand{\ov}{\overline}

\newcommand{\p}{\partial}
\newcommand{\dV}{\dot{V}}
\DeclareMathOperator{\id}{id} 

\begin{document}

\title[Spacelike localization of long-range fields]
{Spacelike localization of long-range fields\\ in
a model of asymptotic electrodynamics}
\author[A. Herdegen]{Andrzej Herdegen}
\address{Institute of Physics\\ Jagiellonian University\\
Reymonta 4\\ 30-059 Cracow\\ Poland}
\email{herdegen@th.if.uj.edu.pl}

\author[K. Rejzner]{Katarzyna Rejzner}
\address{II. Institute for Theoretical Physics\\ Hamburg
University\\ Luruper Chaussee 149\\ 22761 Hamburg\\ Germany}
\email{katarzyna.rejzner@desy.de}

\begin{abstract}
A previously proposed algebra of asymptotic fields in quantum
electrodynamics is formulated as a net of algebras localized in
regions which in general have unbounded spacelike extension.
Electromagnetic fields may be localized in `symmetrical
spacelike cones', but there are strong indications this is
not possible in the present model for charged fields, which
have tails extending in all space directions. Nevertheless, products of appropriately `dressed' fermion fields (with compensating charges) yield bi-localized observables.
\end{abstract}

\maketitle

\renewcommand{\theequation}{\thesection.\arabic{equation}}

\section{Introduction}

In this paper we continue the investigation of the infrared
structure of quantum electrodynamics based on an algebraic
model proposed earlier by one of us (see Ref.\ \cite{her98} and
papers cited therein; see also \cite{her05}). This model is
supposed to describe asymptotic fields in the quantum
Maxwell-Dirac system, including the Gauss' law constraint (as
opposed to the crossed product of free fields).

In a recent paper \cite{her08} this model was investigated in
respect of the localization properties of fields. It was shown
that one needs an extension of the localization regions:
infrared/charge structure is encoded in unbounded regions. It was
argued that from the point of view of scattering theory, the
natural choice for extended localization regions consists of
`fattened lightcones', unions of intersecting: a future- and a
past-lightcone. The test-functions of electromagnetic fields have
well-defined asymptotes encoding the information on the long
distance structure.

In the present article we show that the algebra can be
localized in any `time-slice' which is fattening under constant
inclination towards infinity. In addition, the localization of
electromagnetic field may be restricted to `fattened
symmetrical spacelike cones': the unions of a spacelike cone
and its reflection with respect to a~point in its inside.
Similar restriction seems to be ruled out, even asymptotically,
for charged fields. This seems to contradict general wisdom on the
expected behavior of fields in full electrodynamics, see e.g.\
the assumptions on which Buchholz \cite{bu82} bases his selection
criterion of representations in quantum electrodynamics. Whether this points to some incompleteness of the model is an open question; see the discussion at the beginning of Section \ref{locdir} below and in Conclusions. On the other hand, we show that in the present model, in agreement with the general expectation, one can superpose two appropriately ``dressed'' Dirac fields carrying opposite charges to obtain a local observable.

This article should be regarded as a continuation of Refs.
\cite{her98} and \cite{her08}, and we refer the reader to these
references for more detail and a wider background.
However, we briefly summarize notation and the formulation of the model in the next two sections. We obtain spacelike localization of fields in Sections 4 and 5, and discuss the results in concluding Section 6.

\setcounter{equation}{0}

\section{Geometrical preliminaries}

The geometry of the spacetime is given by the affine Minkowski
space $\M$. If a~ref\-erence point $O$ is chosen, then each
point $P$ in $\M$ is represented by a~vector $x$ in the
associated Minkowski vector space $M$ according to \mbox{$P=O+x$}. We mostly keep $O$ fixed and use this representation. The Minkowski product is denoted by a dot, $x\cdot y$, and we write $x^2=x\cdot x$. If a~Minkowski basis $(e_0,\ldots,e_3)$ in $M$ is chosen, then we
denote \mbox{$x=x^ae_a$}. We also then use the standard
multi-index notation $x^\al=(x^0)^{\al_0}\!\ldots
(x^3)^{\al_3}$, \mbox{$|\al|=\al_0+\ldots+\al_3$},
$D^\be=\p_0^{\be_0}\!\ldots\,\p_3^{\be_3}$, where $\p_a=\p/\p
x^a$. We associate with the chosen Minkowski basis a Euclidean
metric with unit matrix in that basis, and denote by $|x|$ the
norm of $x$ in that metric. We briefly recall the definitions
of test functions spaces used in \cite{her08}. Let $\phi(x)$ be
a smooth tensor or spinor field (with vector representation of
points) and define for $\ka\geq0$, $l=0,1,\dots$ the seminorms
\begin{equation}
 \|\vp\|_{\ka,l}=\sup(1+|x|)^\ka|D^\be\vp_j(x)|\,,
\end{equation}
where supremum is taken over $x\in M$, all $\be$ such that
$|\be|=l$ and $j$ running over the components of the field.
Then $\Sc_\ka$ is the space of all smooth fields of a given
geometrical type for which all seminorms $\|.\|_{\ka+l,l}$ with
fixed $\ka$ are finite. Denote moreover the operators on smooth
functions $H=x\cdot\partial$ and $H_\ka=H+\ka\id$. Then the
space $\Sc^\ka_{\ka+\ep}$ consists of all fields which under
the action of $H_\ka$ fall into $\Sc_{\ka+\ep}$. Each field
$\vp\in\Sc_{\ka+\ep}^\ka$ has an asymptote
\begin{equation}
 \vp_\as(x)=\lim_{R\to\infty}R^\ka\vp(Rx)\,.
\end{equation}
The inversion formulas are
\begin{equation}
 \vp(x)=\int_0^1u^{\ka-1}[H_\ka\vp](ux)\,du\,,\quad
 \vp_\as(x)=\int_0^\infty u^{\ka-1}[H_\ka\vp](ux)\,du\,.
\end{equation}
The subspaces $\Sc_\ka(\W)$, $\Sc_{\ka+\ep}^\ka(\W)$ consist of
functions supported in $\W$. All spaces, as well as asymptotes,
are independent of the choice of an origin and a basis.

Next, we recall some notation for Lorentz invariant
hypersurfaces. We denote by $l$ vectors on the future
lightcone, and we also introduce
\[
 L_{ab}=l_a(\p/\p l^b)-l_b(\p/\p l^a)\,,
\]
which is an operator conveniently expressing differentiation on
the lightcone. We denote by $d^2l$ the invariant measure on the
set of null directions, which is applicable to functions $f(l)$
homogeneous of degree $-2$: the integral
\begin{equation}\label{d2l}
 \int f(l)\,d^2l=\int f(e_0+\vec{\vspace{1pt}l})\,d\W(\vec{\vspace{1pt}l})\,,
\end{equation}
where $d\W(\vec{\vspace{1pt}l})$ is the solid angle measure in the
direction of the unit 3-vector $\vec{\vspace{1pt}l}$, is
independent of the choice of Minkowski basis, and satisfies
\begin{equation}\label{parts}
 \int L_{ab}f(l)\,d^2l=0\,.
\end{equation}
We denote by $H_+$ the hyperboloid $v^2=1$, $v^0>0$. The
differentiation within the hyperboloid is conveniently expressed
by the action of the operator $\delta_a$, and integration with the
use of invariant measure $d\mu$, defined respectively by
\[
 \delta_b=v^a\big[v_a(\p/\p v^b)-v_b(\p/\p v^a)\big]\,,\quad
 d\mu(v)=2\theta(v^0)\delta(v^2-1)\,d^4v\,.
\]
We note that for a differentiable function $f(v)$ vanishing for
$v^0\to\infty$ as $o((v^0)^{-3})$, we have
\begin{equation}\label{vsto}
 \int(\delta-3v)f(v)\,d\mu(v)=0\,.
\end{equation}
For $x$ inside the future lightcone, one can write $x=\la v$, $\la>0$,  and then differentiation and integration over the inside of the future lightcone may be written as
\begin{gather}
 \p/\p x^a=v_a\p_\la+(1/\la)\delta_a\,,\label{vd}\\
 \int F(x)\,d^4x=\int F(\la v)\la^3\,d\la\,d\mu(v)\,.\label{vi}
\end{gather}
Similarly, for the hyperboloid $H_-$ formed by $z^2=-1$, the
differentiation operator and the integration measure are defined,
respectively, by
\[
 \delta_b=-z^a\big[z_a(\p/\p z^b)-z_b(\p/\p z^a)\big]\,,\quad
 d\nu(z)=2\delta(z^2+1)\,d^4z\,.
\]
For $f(z)$ vanishing for $|\vec{\vspace{1pt}z}|\to\infty$ as
$o(|\vec{\vspace{1pt}z}|^{-3})$, there is
\begin{equation}\label{zsto}
 \int(\delta+3z)f(z)\,d\nu(z)=0\,,
\end{equation}
and for $x=\la z$ ($\la>0$) running over the outside of the lightcone, the
analogues of \eqref{vd} and \eqref{vi} are
\begin{gather}
 \p/\p x^a=-z_a\p_\la+(1/\la)\delta_a\,,\label{zd}\\
 \int F(x)\,d^4x=\int F(\la z)\la^3\,d\la\,d\nu(z)\,.\label{zi}
\end{gather}

Finally, we define some spacetime sets used in the article. For
$\ga>0$ and $\delta\in(0,1)$ we shall denote by
$R_{\ga,\delta}$ the region
$|x^0|\leq\ga+\delta|\vec{\vspace{1pt}x}|$ and by $R_\delta$
the region $|x^0|\leq\delta|\vec{\vspace{1pt}x}|$. We note that
\begin{equation}\label{loreu}
 -x^2\geq\frac{1-\delta^2}{1+\delta^2}\,|x|^2\quad\text{for}\quad
 x\in R_\delta\,.
\end{equation}
By a \emph{spacelike cone} we shall mean a closed (solid) cone
in $\M$ such that all vectors going from the apex to other
points of the cone are spacelike. A~\emph{symmetrical spacelike
cone} will be the union of such cone with its reflection with
respect to its apex, and a \emph{fattened symmetrical spacelike
cone} -- the union of such cone with its reflection with
respect to a point inside the cone. An open version of any of
the defined cones will be its interior.

\setcounter{equation}{0}

\section{The model}\label{alg}

We briefly summarize the model formulated in \cite{her98}. The
choice of the test functions spaces is slightly modified.

\subsection{Electromagnetic test functions}

Let $V(s,l)$ be a real vector function of a real variable $s$ and a
future-pointing lightlike vector $l$. We shall understand
differentiability of functions $V_a$ in the sense of the action
of $L_{ab}$ and $\p_s=\p/\p s$, and denote
$\dV(s,l)=\p_sV(s,l)$.  Let~$\V_\ep$ be the real vector space
of $\C^\infty$ functions $V_a(s,l)$ which satisfy the following
additional conditions:
\begin{gather}
 V(\mu s,\mu l)=\mu^{-1}V(s,l)\,,\quad \mu>0\,,\label{hom}\\[.7ex]
 l\cdot V(s,l)=0\,,\label{ortog}\\
 |L_{b_1c_1}\ldots
 L_{b_kc_k}\dV_a(s,l)|\leq\frac{\con(t,k)}{(t\cdot
 l)^2(1+|s|/t\cdot l)^{1+\ep}}\,,\quad k\in\mN\,,\label{falloff1}\\[.5ex]
 V(+\infty,l)=-V(-\infty,l)\equiv\tfrac{1}{2}\Delta V(l)\,,\label{infty}\\[.5ex]
 L_{[ab}\Delta V_{c]}(l)=0\,,\label{DV}
\end{gather}
where the third condition holds for an arbitrarily chosen unit
timelike, future-pointing vector $t$; the bounds are then true
for any other such vector (with some other constants).
Moreover, with the use of homogeneity \eqref{hom}, the bounds
are generalized to
\begin{equation}\label{falloff2}
 |L_{b_1c_1}\ldots L_{b_kc_k}\p_s^nV_a(s,l)|
 \leq\frac{\con(t,n,k)}{(t\cdot l)^2(1+|s|/t\cdot l)^{n+\ep}}\,,\quad
 n,k\in\mN\,,
\end{equation}
It follows from the property \eqref{DV} that
\begin{equation}\label{VPhi}
 l_a\Delta V_b(l)-l_b\Delta V_a(l)=-L_{ab}\Phi_V(l)\,,
\end{equation}
where
\begin{equation}\label{Phi}
 \Phi_V(l)=-\frac{1}{4\pi}\int\frac{l\cdot \Delta V(l')}{l\cdot l'}\,d^2l'
\end{equation}
is a smooth homogeneous function. If
 $\Delta V(l)=l\al(l)$, then
\[
 \Phi_V(l)=-\frac{1}{4\pi}\int\al(l')\,d^2l'=\con\,.
\]
We also note for later use that for $v\in H_+$, there is
\begin{equation}\label{intVPhi}
 \int\frac{v\cdot\Delta V(l)}{v\cdot l}\,d^2l
 =-\int\frac{\Phi_V(l)}{(v\cdot l)^2}\,d^2l\,.
\end{equation}

The spaces $\V_\ep$ form an increasing family for $\ep\searrow
0$, so their union is a~vector space,
\begin{equation}
 \V=\bigcup_{\ep>0}\V_\ep\,.
\end{equation}
This vector space, when viewed as an Abelian group, allows the
following sub- and quotient groups:
\begin{gather}
 \V^0_\as=\{V\in\V\mid l\wedge V(s,l)=0\ \text{and}\
 \Phi_V(l)=n(2\pi/e),\ n\in\mZ\}\,,\label{V0as}\\
 L=\V/\V^0_\as\,;\label{L}
\end{gather}
the elements of the latter will be denoted by $[V]$. The space
$\V$ is equipped with a symplectic form
\begin{equation}\label{symplectic}
 \{V_1,V_2\}=\frac{1}{4\pi}\int(\dV_1\cdot V_2-\dV_2\cdot V_1)(s,l)\,ds\,d^2l\,,
\end{equation}
which is also consistently transferred to $L$.

For each $V\in\V$, the formula
\begin{equation}\label{freeV}
 A(x)=-\frac{1}{2\pi}\int \dV(x\cdot l,l)d^2l
\end{equation}
gives the Lorentz potential of a free electromagnetic field
with well-defined null asymptotes:
\begin{equation}
 \lim_{R\to\infty}RA(x\pm Rl)=\pm V(x\cdot l,l)
 -\tfrac{1}{2}\Delta V(l)
\end{equation}
and a long-range tail of electric type. This is the class of
fields which are produced in typical scattering processes
\cite{her95}. For each spacelike $x$ and any fixed $a$, the
spacelike tail is given by
\begin{equation}\label{asymptote}
 A^\as(x)=\lim_{R\to\infty}RA(a+Rx)=-\frac{1}{2\pi}
 \int \Delta V(l)\,\delta(x\cdot l)\,d^2l=A^\as(-x)\,,
\end{equation}
where $\delta$ is the Dirac measure. Let $F^\as_{ab}$ be the
electromagnetic field of this asymptotic potential. The condition
\eqref{DV} implies that $x^{}_{[a}F^\as_{bc]}(x)=0$, so this field
is of electric type. If $F^\as=0$, we shall say that the field is
infrared-regular, otherwise it will be called infrared-singular.
The symplectic form \eqref{symplectic} is a natural extension, to
the class considered here, of the usual symplectic form of free,
infrared-regular electromagnetic fields.

\subsection{Matter test functions}

 We denote by $\Sc(H_+)$ the space of smooth $4$-spinor functions on $H_+$ for
which all seminorms
\begin{equation}
 \|f\|^{H_+}_{\al,\be}=\sup |v^\al \delta^\be f(v)|
\end{equation}
are finite (with the usual multi-index notation, and supremum
over $v$ and components of the field).

For $f\in \Sc(H_+)$ the Fourier representation in the form of
the formula
\begin{equation}\label{freef}
 \psi(x)=\left(\frac{m}{2\pi}\right)^{3/2}
 \int e^{\txt-imx\cdot v\,\gamma\cdot v}\gamma\cdot v f(v)\,d\mu(v)
\end{equation}
gives a smooth Dirac field, with the timelike asymptote
determined by
\begin{equation}
 f(v)=\lim_{\la\to\infty}\la^{3/2}ie^{\txt i(m\la+\pi/4)\gamma\cdot v}\psi(\la v)\,.
\end{equation}
One has the usual scalar product in the space of these fields
\begin{equation}
 (f_1,f_2)=\int\ov{f_1(v)}\gamma\cdot v f_2(v)\,d\mu(v)=
 \int_\Sigma\ov{\psi_1}\gamma^a\psi_2(x)\,d\sigma_a(x)\,,\label{scalarpr}
\end{equation}
where the second integral is over any Cauchy surface $\Sigma$.
We denote by $\K$ the Hilbert space completion of $\Sc(H_+)$
with respect to this product.

\subsection{The algebra}\label{algebra}

The ${}^*$-algebra $\B$ of the model is generated by elements
$W([V])$, $[V]\in L$, which for simplicity will also be written as
$W(V)$, elements $\Psi(f)$, $f\in\Sc(H_+)$, and a unit $E$ by
\begin{gather}
 \begin{split}
 W(V_1)W(V_2)&= e^{\txt -\frac{i}{2}\{ V_1,V_2\}} W(V_1+V_2)\,,\\
 W(V)^* &= W(-V)\,,\ W(0)=E\,,
 \end{split}\label{weyl}\\[1ex]
[\Psi(f_1),\Psi(f_2)]_+ =0\,,\quad
[\Psi(f_1),\Psi(f_2)^*]_+ =(f_1, f_2)E\,,\label{ferm}\\[1ex]
 W(V)\Psi(f)=\Psi(S_{\Delta V}f)W(V)\,,\label{com}
\end{gather}
where
\begin{equation}\label{S}
 (S_{\Delta V} f)(v)=
 \exp\left(\dsp -\frac{ie}{4\pi}
 \int\frac{v\cdot\Delta V(l)}{v\cdot l}\,d^2l\right)\,f(v)\,.
\end{equation}
Note that the exponent function in the last formula is a multiplier in $\Sc(H_+)$,
so the operator $S_{\Delta V}$ is a~linear automorphism of
$\Sc(H_+)$. This can be easily seen: since for $t\cdot l=1$ and
$v\in H_+$ there is
 $|v\cdot l|^{-1}<|v^0|+|\vec{\vspace{1pt}v}|$, so $\left|\dsp
 \int\frac{\Delta V^a(l)l^{\al}}{(v\cdot l)^{|\al|+1}}\,d^2l\,\right|$ is
polynomially bounded for any multi-index $\al$.  Note also that, by the identity \eqref{intVPhi} and
definitions \eqref{V0as} and \eqref{L}, there is $S_{\Delta
V_2}=S_{\Delta V_1}$ for $V_2\in[V_1]\in L$, so the algebra is
properly defined.

The elements $\Psi(f)$ generate a subalgebra $\B^+$ of the
\rm{CAR} type, and the elements $W(V)$ -- a subalgebra $\B^-$ of
the \rm{CCR} type. We denote by $\beta_V$ the automorphisms of
$\B^+$ defined by
\begin{equation}
 \beta_V(C)=W(V)CW(-V)\,,
\end{equation}
forming a group, $\beta_{V_1}\beta_{V_2}=\beta_{V_1+V_2}$.

Regular, translationally covariant, positive energy
representations of $\B$ are shown, up to a unitary equivalence, to
form a class defined in the following way. Let $\pi_F$ be the
standard positive energy Fock representation of $\B^+$ on $\Hc_F$
with the Fock vacuum vector $\W_F$, and $\pi_r$ be any regular,
translationally covariant, positive energy representation of
$\B^-$ on $\Hc_r$. Define operators $\pi(A)$ on
$\Hc=\Hc_F\otimes\Hc_r$ by
\begin{equation}\label{rep}
 \begin{split}
    \pi(C)&=\pi_F(C)\otimes\id_r\,,\quad C\in\B_\as^+\,,\\
    \pi(W(V))[\pi_F(B)\W_F\otimes\varphi]&=
    \pi_F(\be_VB)\W_F\otimes\pi_r(W(V))\varphi\,,\quad B\in\B_\as^+\,.
 \end{split}
\end{equation}
Then $\pi$ extends to a regular, translationally covariant
positive energy representation of $\B$. We add one further
demand to our selection criterion, that
$\pi_r(W(V_1))=\pi_r(W(V_2))$ whenever $l\wedge V_1=l\wedge
V_2$, which is related to the gauge invariance.

One shows that all representations from the class thus defined
determine the same $C^*$-norm on $\B$; the completion of
$\B$ in this norm is the $C^*$-algebra $\F$ of the model.

\setcounter{equation}{0}

\section{Spacelike localization of electromagnetic
fields}\label{locelm}

We now want to equip the elements of the algebra with spacetime
localization properties. We start with the electromagnetic
fields, which have direct observable status. The way to ascribe
spacetime properties to them is to represent the classical test
fields $A$ in \eqref{freeV} as
\begin{equation}\label{freeJ}
 A(x)=4\pi\int D(x-y)\,J(y)\,d^4y\,.
\end{equation}
Here $J$ is a classical conserved test current field, and
$D(x)=D(0,x)$, with
\begin{equation}
 D(m,x)=\frac{i}{(2\pi)^3}\int\sgn p^0\delta(p^2-m^2)e^{-ip\cdot x}\,dp\,.
\end{equation}
We want the supports of $J$ to be contained between two Cauchy
surfaces. This may be interpreted as a generalized time-slice
property.

We shall be concerned with conserved test currents $J$ which
are elements of $\Sc^3_{3+\ep}(R_{\ga,\delta})$. Then the
asymptote $J_\as$ has the support in $R_\delta$. For such
currents the integral in \eqref{freeJ} is absolutely convergent
and determines a~corresponding $A$. We want to find out whether
this potential is of the type given by \eqref{freeV}. We start
with a useful subsidiary result.
\begin{lem}
 Let $J_\as$ be a homogeneous of degree $-3$ vector function, smooth
 outside the origin, with support in $R_\delta$.
 The following statements are equivalent.\\
\hspace*{1em}(i) The continuity equation
\begin{equation}\label{ascont}
 \p\cdot J_\as(x)=0
\end{equation}
is satisfied distributionally.\\
\hspace*{1em}(ii) $J_\as$ satisfies the following conditions on
$H_-$
\begin{gather}
 \delta\cdot J_\as(z)+3z\cdot J_\as(z)=0\,,\label{contdif}\\
 \int z\cdot J_\as(z)\,d\nu(z)=0\,.\label{contzero}
\end{gather}
\hspace*{1em}(iii) $J_\as$ is an asymptote of some conserved
current $J\in\Sc^3_{3+\ep}(R_{\ga,\delta})$.

In particular, these conditions are satisfied for $J_\as$ of
the special form
\begin{equation}\label{zg}
 J_\as(x)=xg(x)\quad \txt{with}\quad
 \int g(z)\,d\nu(z)=0\,,
\end{equation}
where $g$ is a scalar function homogeneous of degree $-4$, smooth outside the origin.
\end{lem}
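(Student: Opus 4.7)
The plan is to go around the cycle (i)$\Leftrightarrow$(ii), (iii)$\Rightarrow$(i), (ii)$\Rightarrow$(iii), and conclude by direct computation of \eqref{zg}.

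For (i)$\Leftrightarrow$(ii) I would test $\p\cdot J_\as$ against a Schwartz function $\phi$ and pass to polar coordinates $x=\la z$, $z\in H_-$, on the spacelike region, using \eqref{zd} and \eqref{zi}. Homogeneity of degree $-3$ factors $\la^{-3}$ out of $J_\as(\la z)$; one integration by parts in $\la\in(0,\infty)$ generates a boundary term at $\la\to 0^+$, and one integration by parts on $H_-$ (via \eqref{zsto}) brings out $\delta\cdot J_\as+3z\cdot J_\as$. The resulting expression,
\[
 \<\p\cdot J_\as,\phi\>=-\phi(0)\!\int z\cdot J_\as(z)\,d\nu(z)+\!\int_0^\infty\!\!\int_{H_-}\!\la^{-1}\phi(\la z)[\delta\cdot J_\as+3z\cdot J_\as](z)\,d\nu(z)\,d\la,
\]
vanishes for every $\phi$ iff both \eqref{contdif} and \eqref{contzero} hold (let $\phi$ be supported away from $0$ first to isolate \eqref{contdif}, then let $\phi$ be arbitrary to obtain \eqref{contzero}).

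For (iii)$\Rightarrow$(i) I would look at the scaled family $J_R(x):=R^3J(Rx)$, which satisfies $\p\cdot J_R(x)=R^4(\p\cdot J)(Rx)=0$; from the inversion formula for the asymptote applied to $H_3J\in\Sc_{3+\ep}$ (using also $H_3J_\as=0$) one obtains the improved decay $|J(y)-J_\as(y)|=O(|y|^{-3-\ep})$, whence $|J_R(x)-J_\as(x)|\leq CR^{-\ep}|x|^{-3-\ep}$ and therefore $J_R\to J_\as$ in the sense of distributions, forcing $\p\cdot J_\as=0$ distributionally. The main step is (ii)$\Rightarrow$(iii). Let $\chi(x)=\chi(|x|)$ be a smooth cutoff with $\chi\equiv 0$ on $|x|\leq 1$ and $\chi\equiv 1$ on $|x|\geq 2$, and set $J^{(0)}:=\chi J_\as$. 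Then $J^{(0)}$ is smooth, supported in $R_\delta$, has asymptote $J_\as$, and because $H_3J^{(0)}=(H\chi)J_\as$ is smooth with compact support, $J^{(0)}\in\Sc^3_{3+\ep}(R_{\ga,\delta})$ for any $\ep>0$. Its divergence $h:=(\p\chi)\cdot J_\as$ is smooth and supported in $\{1\leq|x|\leq 2\}\cap R_\delta$; Gauss's theorem on $|x|\leq R\geq 2$ combined with the distributional conservation of $J_\as$ gives $\int h\,d^4x=0$. The standard compactly supported Poincar\'e lemma in $\mR^4$ then produces a smooth, compactly supported $K$ with $\p\cdot K=-h$; enlarging $\ga$ so that $\supp K\subset R_{\ga,\delta}$, the sum $J:=J^{(0)}+K$ is the required conserved element of $\Sc^3_{3+\ep}(R_{\ga,\delta})$ with asymptote $J_\as$.

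For \eqref{zg} the check is immediate: on $H_-$, the definition of $\delta$ and the relation $z^2=-1$ give $\delta_b z^a=\delta_b^a+z_bz^a$ and $z\cdot\delta\equiv 0$, so $\delta\cdot(zg)=3g$ and $3z\cdot(zg)=-3g$, which yields \eqref{contdif}; and \eqref{contzero} becomes the hypothesis $\int g\,d\nu=0$. The most delicate step of the whole argument is the construction of $K$ in (ii)$\Rightarrow$(iii): the necessary and sufficient obstruction $\int h\,d^4x=0$ is provided precisely by (i), and the required support condition is absorbed by enlarging $\ga$.
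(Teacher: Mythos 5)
Your argument is correct, but it differs from the paper's proof in two places. The equivalence (i)$\Leftrightarrow$(ii) is essentially the paper's computation in polar bookkeeping: the paper obtains the same boundary contribution $\vp(0)\int z\cdot J_\as(z)\,d\nu(z)$ as a limit of fluxes through the hyperboloids $x^2=-\epsilon^2$, which is just another way of organizing your integration by parts in $\la$. For (iii) the paper argues directly towards (ii): the differential condition \eqref{contdif} is read off as the limit of $\p\cdot J(x)=0$ for $x^2\to-\infty$, and \eqref{contzero} follows by integrating the continuity equation over $x^2\geq-R^2$ and letting $R\to\infty$; your scaling argument $J_R(x)=R^3J(Rx)\to J_\as$ in the distributional sense is an equally valid alternative (note only that your bound $|J_R(x)-J_\as(x)|\leq C R^{-\ep}|x|^{-3-\ep}$ holds for $|x|\geq 1/R$, the ball $|x|\leq 1/R$ being disposed of by $|J_R|\leq CR^3$ and local integrability of $J_\as$ -- routine, but worth a line). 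The genuine divergence is in (ii)$\Rightarrow$(iii): the paper simply mollifies, $J_\rho=\rho*J_\as$ with $\int\rho\,d^4x=1$ and $\supp\rho\subset\{|x|\leq\ga/\sqrt{2}\}$, see \eqref{Jrho}; conservation is then automatic from the distributional equation (i), the support lands in the originally given $R_{\ga,\delta}$, the asymptote is $J_\as$, and this particular current is reused later (Proposition \ref{Jzg}, Theorem \ref{AfromJ}). Your construction -- radial cutoff $\chi J_\as$ plus a compactly supported correction $K$ with $\p\cdot K=-(\p\chi)\cdot J_\as$ from the compactly supported Poincar\'e lemma -- also works, and your identification of the obstruction is sound (testing $\p\cdot J_\as$ on the test function $1-\chi$ gives $\int(\p\chi)\cdot J_\as\,d^4x=0$ directly); but it costs more labour and, as stated, only produces a current in $R_{\ga',\delta}$ for an enlarged $\ga'$. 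If the region is regarded as prescribed, either solve $\p\cdot K=-h$ with support in the connected open interior of $R_{\ga,\delta}$ (the compactly supported Poincar\'e lemma applies there as well), or use the convolution construction, which gives the sharper localization for free. Your verification of the special case $J_\as(x)=xg(x)$ agrees with the computation the paper leaves to the reader.
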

\begin{proof}
The condition \eqref{contdif} is equivalent to \eqref{ascont}
for $x$ outside the origin (use \eqref{zd}). If it holds, then
we have for any test function $\vp$
\begin{multline}
 \int J_\as^b(x)\p_b\vp(x)\,d^4x=
 \lim_{\epsilon\to
 0}\int_{x^2=-\epsilon^2}\vp(x)J_\as^b(x)\,d\sigma_b(x)\\
 =\vp(0)\int z\cdot J_\as(z)\,d\nu(z)\,,
\end{multline}
which proves the equivalence of (i) and (ii). Let $\rho$ be a
smooth function with support in $|x|\leq\ga/\sqrt{2}$ and such
that \mbox{$\int\rho(x)\,d^4x=1$}. The vector function
\begin{equation}\label{Jrho}
 J_\rho=\rho*J_\as
\end{equation}
is easily shown to be in $\Sc^3_{3+\ep}(R_{\ga,\delta})$ with
the asymptote $J_\as$, and if (i) is true, then it satisfies
the continuity equation. Conversely, if $J_\as$ is the
asymptote of a~conserved $J\in\Sc^3_{3+\ep}(R_{\ga,\delta})$,
then it is supported in $R_\delta$ and \eqref{contdif} is the
limit of the continuity equation $\p\cdot J(x)=0$ for
$x^2\to-\infty$. Integrating the latter equation over the
region $x^2\geq-R^2$ and taking the limit $R\to\infty$ one
arrives at \eqref{contzero}. The statement concerning
\eqref{zg} is easily checked.
\end{proof}

We note for future use that by \eqref{zsto} and \eqref{contdif}
one has for any continuously differentiable function $f(z)$
\begin{equation}\label{Jdot}
 \int J_\as\cdot\delta f(z)\,d\nu(z)=0\,.
\end{equation}
\begin{thm}
 Let $J\in\Sc^3_{3+\ep}(R_{\ga,\delta})$ be a conserved current.
 Then the function
\begin{equation}\label{VR}
 \dV(s,l)=\frac{1}{s}\big(V_0(s,l)-V_0(0,l)\big)\,,
\end{equation}
where
\begin{equation}\label{V0}
 V_0(s,l)=\int\delta(s-x\cdot l)H_3J(x)\,d^4x\,,
\end{equation}
satisfies conditions \eqref{hom} and \eqref{ortog}, and $J$ and
$V$ generate the same $A$ according to \eqref{freeJ} and
\eqref{freeV} respectively. If the asymptote of $J$ is odd:
\begin{equation}
 J_\as(-x)=-J_\as(x)\,,
\end{equation}
then $V_0(0,l)=0$, so $V$ satisfies also \eqref{falloff1}, and
it may be then obtained by
\begin{equation}\label{VlimR}
 V(s,l)=\lim_{R\to\infty}V^R(s,l)\,,\quad
 V^R(s,l)=\int_{x^2\geq-R^2}\delta(s-x\cdot l)J(x)\,d^4x
\end{equation}
 with $V^R(s,l)$ uniformly bounded and with
\begin{equation}\label{DeltaV}
 \Delta V(l)=\int\frac{J_\as(z)}{z\cdot l}\,d\nu(z)
\end{equation}
(the integral in the principal value sense). If in addition
$L_{[ab}\Delta V_{c]}(l)=0$, then $V\in\V_\ep$. This is, in
particular, fulfilled for $J_\as$ of the type given by \eqref{zg}
with even~$g(z)$.

If $J_1$ and $J_2$ are two currents satisfying all the above
assumptions, then
\begin{equation}\label{sympl}
 \{V_1,V_2\}=\lim_{R\to\infty}\tfrac{1}{2}
 \int_{x^2\geq-R^2}[J_1\cdot A_2-J_2\cdot A_2](x)\,d^4x\,.
\end{equation}
\end{thm}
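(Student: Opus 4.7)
The plan is to verify the claims in sequence, using two ingredients repeatedly: the inversion formula $J(y)=\int_0^1 u^2[H_3J](uy)\,du$ (the $\ka=3$ case of (2.3)) and the observation that, while $J$ itself is only in $\Sc^3_{3+\ep}$ with logarithmically convergent null-plane integrals, $H_3J\in\Sc_{3+\ep}$ is absolutely integrable on $\mR^4$ and on every null hyperplane. Condition \eqref{hom} is immediate from $\delta(\mu s-x\cdot\mu l)=\mu^{-1}\delta(s-x\cdot l)$, so that $V_0(\mu s,\mu l)=\mu^{-1}V_0(s,l)$ and $\dV$ is homogeneous of degree $-2$. For \eqref{ortog}, I contract \eqref{V0} with $l$, use $l\cdot H_3J=H_3(l\cdot J)$, and integrate by parts with $(x\cdot l)\delta'(s-x\cdot l)=s\delta'(s-x\cdot l)+\delta(s-x\cdot l)$, reducing the contraction to $l\cdot V_0(s,l)=s\partial_s\phi(s,l)$ with $\phi(s,l)=\int(l\cdot J)(x)\delta(s-x\cdot l)\,d^4x$ the flux of $J$ through the null hyperplane $x\cdot l=s$; conservation of $J$ makes $\phi$ constant in $s$, and the $\Sc^3_{3+\ep}$-falloff in $R_{\ga,\delta}$ forces $\phi\to 0$ as $|s|\to\infty$, so $\phi\equiv 0$ and $l\cdot\dV=0$. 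To see that $V$ and $J$ generate the same $A$, I substitute the inversion formula into \eqref{freeJ}, rescale $y\to y/u$ using $D(\la x)=\la^{-2}D(x)$ for $\la>0$, and recognise the resulting null-plane representation of $D$ as producing precisely \eqref{freeV} with the prescribed $\dV$.

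For odd $J_\as$ I compute $V_0(0,l)$ directly on the null hyperplane $x\cdot l=0$: introducing a radial coordinate $r$ and a Euclidean $2$-sphere $S^2_l$ in that hyperplane, the identity $H_3J=r^{-2}\partial_r(r^3J)$ telescopes the radial integral and leaves $V_0(0,l)=\int_{S^2_l}J_\as(\hat x)\,d\Omega(\hat x)$, which vanishes by antipodal symmetry. The bound \eqref{falloff1} then follows by rescaling $x=sy$ in \eqref{V0}, giving $V_0(s,l)=s^3\int\delta(1-y\cdot l)(H_3J)(sy)\,d^4y=O(|s|^{-\ep})$ from the $\Sc_{3+\ep}$-bound on $H_3J$, whence $\dV=V_0/s=O(|s|^{-1-\ep})$; the higher-derivative statement \eqref{falloff2} comes from the same estimate applied to the $L_{ab}$- and $\partial_s^n$-seminorms of $H_3J$. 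For the limit representation \eqref{VlimR}, inserting the inversion formula into $V^R$ and changing variables $y=ux$ gives $V^R(s,l)=\int_0^1 u^{-1}V_0^{(uR)}(us,l)\,du$, where $V_0^{(R')}$ denotes $V_0$ with the hyperboloidal cutoff; dominated convergence powered by $H_3J\in\Sc_{3+\ep}$ yields $V_0^{(uR)}\to V_0$ as $R\to\infty$, and the substitution $t=us$ produces $V(s,l)=\int_0^s V_0(t,l)/t\,dt$ in agreement with $\dV=V_0/s$. Uniform boundedness of $V^R$ follows from the same dominating function.

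Integrating $\dV=V_0/s$ over $s\in\mR$ and exchanging the orders of integration gives $\Delta V(l)=\int H_3J(x)/(x\cdot l)\,d^4x$ as a principal value; passing to hyperboloidal coordinates $x=\la z$, $z\in H_-$, on the spacelike tail region the inner $\la$-integral is the asymptote version of the inversion formula and equals $J_\as(z)$, producing \eqref{DeltaV} (the timelike-$v$ contribution vanishes because $J_\as$ is supported in $R_\delta$). For $J_\as(z)=z\,g(z)$, differentiating \eqref{DeltaV} under the integral gives the cyclic sum $L_{ab}\Delta V_c+L_{bc}\Delta V_a+L_{ca}\Delta V_b$ proportional to $g(z)$ times an expression in $l_a z_b z_c$ that cancels identically, establishing $L_{[ab}\Delta V_{c]}=0$; evenness of $g$ is oddness of $J_\as$, and \eqref{contzero} reduces to $\int g\,d\nu=0$. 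Finally, for the symplectic formula I insert $A_2=-\tfrac{1}{2\pi}\int\dV_2(x\cdot l,l)\,d^2l$ into $\int_{x^2\geq-R^2}J_1\cdot A_2\,d^4x$, interchange integrations so that the inner piece becomes $V_1^R(s,l)\cdot\dV_2(s,l)$, let $R\to\infty$, and antisymmetrise in $(1,2)$ to recover $\{V_1,V_2\}$.

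The main obstacle is the control of the $R\to\infty$ limits in \eqref{VlimR} and in the symplectic formula. The unregularised null-plane integral $\int\delta(s-x\cdot l)J\,d^4x$ is only logarithmically convergent because $J$ merely lies in $\Sc^3_{3+\ep}$, and the cancellation is supplied by the antipodal vanishing of $\int_{S^2_l}J_\as\,d\Omega$; tracking this uniformly in $s$ through the hyperboloidal cutoff is the delicate step. The crucial device is to re-express everything via the inversion formula so that one effectively integrates $H_3J\in\Sc_{3+\ep}$, for which absolute integrability supplies a dominating function and makes the $R\to\infty$ limit routine.
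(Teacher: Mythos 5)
Your reductions of \eqref{hom}, the vanishing $V_0(0,l)=\int_{S^2_l}J_\as\,d\Omega=0$, the falloff bound, the ``same $A$'' computation and the formula \eqref{DeltaV} are sound and close in spirit to the paper's proof, which likewise routes everything through the absolutely integrable $H_3J$. The genuine gap is your treatment of \eqref{VlimR}. The identity $V^R(s,l)=\int_0^1u^{-1}V_0^{(uR)}(us,l)\,du$ is correct, but dominated convergence fails at $u=0$: uniformly in $R$ one only has $|V_0^{(uR)}(us,l)|\leq\con$, so the integrand is merely $O(1/u)$, and the region where $uR$ stays bounded contributes a finite, generally nonzero amount in the limit. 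Indeed your asserted limit $\int_0^sV_0(t,l)\,dt/t$ is the antiderivative of $\dV$ vanishing at $s=0$, whereas the theorem's $V$ is pinned by the symmetric normalization \eqref{infty}; the two differ by the constant $V(0,l)$, which need not vanish. Concrete check: for a smooth compactly supported conserved $J$ (so $J_\as=0$, trivially odd) one has $V^R(s,l)=\int\delta(s-x\cdot l)J(x)\,d^4x=:P(s,l)$ for all large $R$, while $\int_0^sV_0(t,l)\,dt/t=P(s,l)-P(0,l)$, and $P(0,l)\neq0$ generically (take $J^a=\p_bM^{ab}$ with $M$ antisymmetric, smooth, compactly supported). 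The contribution your limit interchange discards is exactly the hyperboloidal boundary term the paper isolates via the divergence identity leading to \eqref{V0R} and \eqref{VsR}: the term $\int_{-\infty}^{s/R}\tau^{-1}\big\{\int\delta(\tau-z\cdot l)J_\as(z)\,d\nu(z)\big\}\,d\tau$, whose limit $\tfrac12\Delta V(l)$ --- obtained by using the oddness of $J_\as$ a second time --- is precisely what restores the correct normalization and yields \eqref{VlimR}. As it stands your argument establishes neither \eqref{VlimR} nor the uniform boundedness of $V^R$ (the claimed dominating function does not exist), and both are needed in the $R\to\infty$ step of \eqref{sympl}: an additive error $c_2(l)$ in $\lim_R V_2^R$ would shift the result by $\frac{1}{4\pi}\int\Delta V_1(l)\cdot c_2(l)\,d^2l$, which is nonzero in general.

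A smaller repair: in your proof of \eqref{ortog} the intermediate flux $\phi(s,l)=\int\delta(s-x\cdot l)\,l\cdot J(x)\,d^4x$ is only logarithmically convergent for $J\in\Sc^3_{3+\ep}$ without oddness, so the integration by parts producing $l\cdot V_0=s\p_s\phi$ is not legitimate as written. Argue instead directly with $H_3J$, which is itself conserved ($\p\cdot H_3J=H_4(\p\cdot J)=0$) and absolutely integrable on null planes: $l\cdot V_0(s,l)$ is its flux through $x\cdot l=s$, hence independent of $s$ by conservation, and it tends to $0$ as $|s|\to\infty$ by the $O(|s|^{-\ep})$ estimate, so it vanishes identically.
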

\begin{proof}
We first observe that as $H_3J(x)$ vanishes as $|x|^{-3-\ep}$ in
infinity, the integral \eqref{V0} is absolutely convergent, and
relations \eqref{hom} and \eqref{ortog} are easily seen to hold for $V_0$. Moreover,
with $X_{ab}=x_a\p/\p x^b-x_b\p/\p x^a$, we have
\begin{multline}\label{estV0}
 |L_{a_1b_1}\ldots L_{a_kb_k}V_0(s,l)|=\left|\int\delta(s-x\cdot l)
 X_{a_1b_1}\ldots X_{a_kb_k}H_3J(x)\,d^4x\right|\\
 \leq\con\int\delta(s-x\cdot l)(1+|x|)^{-3-\ep}\,d^4x
 \leq\frac{\con}{t\cdot l(1+|s|/t\cdot l)^{\ep}}\,.
\end{multline}

If $A$ is generated by $J$, then one finds easily that $H_1A$ is
generated by $H_3J$. It is then also easily seen, using the
representation
\[
 D(x)=-(1/8\pi^2)\int\delta'(x\cdot l)\,d^2l\,,
\]
that $\dV_0$ generates $H_1A$ by \eqref{freeV}. But then it
follows that $A$ may be obtained by \eqref{freeV} from $\dV$
defined by \eqref{VR}.

We want to obtain another form of $V_0$. For any $R>0$ we have
\begin{multline}
 \p\cdot\bigg\{x\,\delta(s-x\cdot l)
 \Big[J(x)-\theta(-x^2\!-R^2)J_\as(x)\Big]\bigg\}\\
 =-s\,\delta'(s-x\cdot l)
 \Big[J(x)-\theta(-x^2\!-R^2)J_\as(x)\Big]\\
 +\delta(s-x\cdot l)H_3J(x)
 -2R^2\delta(s-x\cdot l)\delta(x^2+R^2)J_\as(x)\,.
\end{multline}
The l.h.s.\ yields zero when integrated over whole space, so we
find
\begin{multline}\label{V0R}
 V_0(s,l)=s\,\p_s\int\delta(s-x\cdot l)
 \Big[J(x)-\theta(-x^2\!-R^2)J_\as(x)\Big]\,d^4x\\
 +\int\delta\Big(\frac{s}{R}-z\cdot l\Big)J_\as(z)\,d\nu(z)\,.
\end{multline}
Setting here $s=0$, we find
\begin{equation}
 V_0(0,l)=\int\delta(z\cdot l)J_\as(z)\,d\nu(z)\,,
\end{equation}
so if $J_\as$ is odd, what we assume from now on, there is
$V_0(0,l)=0$, and then $V$ satisfies the bounds \eqref{falloff1} (use \eqref{estV0}).
We note that if $V_0(0,l)\neq0$, then $\dV(s,l)$ falls off only as
$1/|s|$ and is outside the class~$\V$.

We integrate \eqref{VR} with the use of \eqref{V0R}, and find
\begin{multline}\label{VsR}
 V(s,l)-V(-\infty,l)=V^R(s,l)
 +\int_{-\infty}^{s/R}\frac{1}{\tau}
 \bigg\{\int\delta(\tau-z\cdot l)
 J_\as(z)\,d\nu(z)\bigg\}\,d\tau\\
 +\int_{x^\leq-R^2}\delta(s-x\cdot l)
 (J-J_\as)(x)\,d^4x\,,
\end{multline}
with $V^R$ as defined in \eqref{VlimR}. The last term vanishes
both in the limit \mbox{$R\to\infty$} as well as $|s|\to\infty$,
and $V^R(s,l)$ vanishes for $|s|\to\infty$; the uniform
boundedness of $V^R(s,l)$ is also easily seen. We write down the
limit versions of \eqref{VsR} for $R\to\infty$ and for $s\to\infty$,
respectively (remember that
$V(+\infty,l)=-V(-\infty,l)=\tfrac{1}{2}\Delta V(l)$)
\begin{equation}\label{VDeltaV}
 V(s,l)+\tfrac{1}{2}\Delta V(l)=\lim_{R\to\infty}V^R(s,l)
 +\int_{-\infty}^0\frac{1}{\tau}
 \bigg\{\int\delta(\tau-z\cdot l)
 J_\as(z)\,d\nu(z)\bigg\}\,d\tau\,,
\end{equation}
\begin{equation}\label{DeltaV'}
 \Delta V(l)=\int_{-\infty}^{+\infty}
 \frac{1}{\tau}
 \bigg\{\int\delta(\tau-z\cdot l)
 J_\as(z)\,d\nu(z)\bigg\}\,d\tau
 =\lim_{\ep\to0}\int_{|z\cdot l|\geq\ep}
 \frac{J_\as(z)}{z\cdot l}\,d\nu(z)\,.
\end{equation}
The last equation gives \eqref{DeltaV}. Due to the oddness of
$J_\as$ the second term on the r.h.s.\ of \eqref{VDeltaV} is
then $\tfrac{1}{2}\Delta V(l)$, and we thus obtain
\eqref{VlimR}. If \eqref{DV} is satisfied, then $V\in\V_\ep$. We
note that the differentiation on the cone is transferred to the
differentiation on the hyperboloid:
\begin{equation}
 L_{ab}\int\frac{J_\as(z)}{z\cdot l}\,d\nu(z)=
 \int\frac{(z_a\delta_b-z_b\delta_a)J_\as(z)}{z\cdot l}\,d\nu(z)\,,
\end{equation}
therefore $\Delta V$ is smooth, and for $J_\as=zg(z)$ the
condition \eqref{DV} is satisfied automatically.

The last point concerns the symplectic form. We have
\begin{multline}
 \frac{1}{4\pi}\int(\dV_1\cdot V^R_2-\dV_2\cdot
 V^R_1)(s,l)\,ds\,d^2l\\
 =\frac{1}{4\pi}
 \int_{x^2\geq-R^2}\Big[\dV_1(x\cdot l)J_2(x)-
 \dV_2(x\cdot l)\cdot J_1(x)\Big]\,d^2l\,d^4x\\
 =\frac{1}{2}\int_{x^2\geq-R^2}(J_1\cdot A_2-J_2\cdot A_1)(x)\,,d^4x
\end{multline}
due to the representation \eqref{freeV}. As $V^R_i(s,l)$ are
uniformly bounded, by the Lebesgue theorem the l.h.s.\ has a
finite limit $\{V_1,V_2\}$ for $R\to\infty$, so also the
r.h.s.\ has a finite limit, and one arrives at \eqref{sympl}. We
note, however, that the integrand of the r.h.s.\ is not
absolutely integrable on the whole space. The mechanism of the
convergence in the limit relays on the fact that the asymptotes
of $J_i$ are odd, while those of $A_i$ are even, so their
products do not contribute, if integration is done in the above
sense.
\end{proof}

A particular test current
$J_\rho\in\Sc^3_{3+\ep}(R_{\ga,\delta})$ with the given
asymptote $J_\as$ supported in $R_\delta$ was given in
\eqref{Jrho}. We want to find its corresponding function
$V_\rho$. We start with the following geometrical observation:
for $y\in R_\delta$ and $|x-y|\leq\ga$ there is
\begin{equation}\label{theta}
 |\theta(x^2+R^2)-\theta(y^2+R^2)|
 \leq\theta\big(-y^2-(R-R_1)^2\big)
 \theta\big(y^2+(R+R_2)^2\big)
\end{equation}
for $R\geq R_1$, with some $\ga$- and $\delta$-dependent
constants $R_1$, $R_2$. This seems rather intuitive, but we
give a formal proof in Appendix. It is then easy to see that
instead of formula \eqref{VlimR} one can use
$V_\rho=\lim_{R\to\infty}V^{'R}_\rho$ with
\begin{equation}
 V^{'R}_\rho(s,l)=\int\delta(s-w\cdot l-y\cdot l)\rho(w)
 \theta(y^2+R^2)J_\as(y)\,d^4w\,d^4y\,.
\end{equation}
If we denote
\begin{gather}
 H(s,l)=\int\sgn(s-x\cdot l)\rho(x)\,d^4x\,,\\
 V_\as^R(s,l)=\int\delta(s-x\cdot l)\theta(x^2+R^2)J_\as(x)\,d^4x\,,
\end{gather}
then we have
\begin{equation}\label{VsR'}
 V^{'R}_\rho(s,l)=\tfrac{1}{2}\int \dot{H}(s-\tau,l)V_\as^R(\tau,l)\,d\tau\,.
\end{equation}
Using in the following first step \eqref{zi} and the homogeneity
of $J_\as(x)$, and in the second step oddness of $J_\as(x)$, we
find
\begin{multline}
 V_\as^R(\tau,l)=\int\theta\left(\frac{\tau}{z\cdot l}\right)
 \theta\left(R-\frac{\tau}{z\cdot l}\right)
 \frac{J_\as(z)}{|z\cdot l|}\,d\nu(z)\\
 = \tfrac{1}{2}\sgn(\tau)
 \int_{|z\cdot l|\geq\frac{|\tau|}{R}}\frac{J_\as(z)}{z\cdot l}\,d\nu(z)\,.
\end{multline}
Thus for  $R\to\infty$ the absolute value of \eqref{VsR'} remains
bounded, and one finds
\begin{equation}\label{VrhoH}
 V_\rho(s,l)=H(s,l)\tfrac{1}{2}\Delta V(l)\,,
\end{equation}
with $\Delta V(l)$ given by \eqref{DeltaV}. Note that
$H(\pm\infty,l)=\pm1$.

Assume now that  $\Delta V(l)$ satisfies \eqref{DV} and is
therefore determined up to a gauge by $\Phi_V(l)$.
\begin{prop}\label{PhiJ}
 For $\Delta V(l)$ given by \eqref{DeltaV} there is
\begin{equation}
 \Phi_V(l)=\int z\cdot J_\as(z)\log|z\cdot l|\,d\nu(z)\,.
\end{equation}
\end{prop}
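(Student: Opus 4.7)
The plan is to compute $\Phi_V(l)$ directly from the definition \eqref{Phi} using the representation \eqref{DeltaV} for $\Delta V(l')$. Since $l'\cdot\Delta V(l')=0$ by \eqref{ortog}, I will rewrite $l\cdot\Delta V(l')=(l-l')\cdot\Delta V(l')$ in the integrand of \eqref{Phi}; the numerator then vanishes linearly in $l-l'$ while $l\cdot l'$ vanishes quadratically, so the integrand is integrable at $l'=l$ and Fubini applies after inserting the principal-value representation of $\Delta V(l')$. The problem is thereby reduced to evaluating the ``celestial'' kernel
\[
K_a(l,z)=\int\frac{(l_a-l'_a)\,d^2 l'}{(l\cdot l')(z\cdot l')}\,.
\]

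By Lorentz invariance, $K_a$ must take the form $A(z\cdot l)\,l_a+B(z\cdot l)\,z_a$ with scalars $A,B$ of the unique invariant $z\cdot l$. Contracting with $l^a$ yields $-\int d^2 l'/(z\cdot l')$, which in the principal-value sense vanishes (being a Lorentz scalar depending only on $z^2=-1$ and reversing sign under $z\to-z$); hence $B=0$. Contracting with $z^a$ gives $(z\cdot l)\int d^2 l'/[(l\cdot l')(z\cdot l')]-\int d^2 l'/(l\cdot l')$; each piece is logarithmically divergent at $l'=l$, but a direct computation in a Lorentz frame adapted to $l$ (e.g.\ $l=e_0+e_3$ with $z$ restricted to the $e_0$--$e_3$ plane) shows that the divergences cancel and leave $z^a K_a=4\pi\log|z\cdot l|$. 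Thus $K_a=l_a\,4\pi\log|z\cdot l|/(z\cdot l)$, and
\[
\Phi_V(l)=-\int\frac{l\cdot J_\as(z)}{z\cdot l}\,\log|z\cdot l|\,d\nu(z)\,.
\]

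The final step converts this into the asserted form by integration by parts on $H_-$. From $\delta_a(z\cdot l)=l_a+z_a(z\cdot l)$ one obtains $\delta_a\log|z\cdot l|=l_a/(z\cdot l)+z_a$, and contracting with $J_\as^a$ gives $(l\cdot J_\as)/(z\cdot l)+z\cdot J_\as=J_\as\cdot\delta\log|z\cdot l|$. Multiplying by $\log|z\cdot l|$ recasts the right-hand side as $J_\as\cdot\delta[\tfrac12(\log|z\cdot l|)^2]$, and \eqref{Jdot} applied to $f=\tfrac12(\log|z\cdot l|)^2$ yields $\int[(l\cdot J_\as)/(z\cdot l)+z\cdot J_\as]\log|z\cdot l|\,d\nu=0$, which combined with the previous display produces the claim. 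The main obstacle is the evaluation of $K_a$: although the integral defining $K_a$ is convergent, the Lorentz-invariant reduction naturally produces individually divergent contractions whose logarithmic singularities must be seen to cancel exactly, and verifying this cancellation requires an explicit coordinate computation. A secondary issue is that $(\log|z\cdot l|)^2$ is mildly singular on the codimension-one locus $z\cdot l=0$ within the compact support of $J_\as|_{H_-}$, but the singularity is integrable and a standard regulator justifies the use of \eqref{Jdot}.
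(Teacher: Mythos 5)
Your closing step is sound, and in fact it mirrors the paper's own device: the identity $\delta_a\log|z\cdot l|=l_a/(z\cdot l)+z_a$ combined with \eqref{Jdot} applied to $\tfrac12(\log|z\cdot l|)^2$ is precisely the kind of manipulation the paper performs (there with an $\artanh$-type function). The gap is in the evaluation of $K_a$, which carries the whole argument. The measure $d^2l'$ of \eqref{d2l} is defined, and Lorentz invariant, only on integrands homogeneous of degree $-2$ in $l'$. After your subtraction the numerator contains $l'_a$, of degree $1$, so the integrand of $K_a$ is no longer of degree $-2$: the value of $K_a(l,z)$ then depends on the choice of section of the cone (equivalently, on the unit timelike vector $t$ implicit in \eqref{d2l}) and is \emph{not} a covariant function of $(l,z)$ alone. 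Hence the ansatz $K_a=A(z\cdot l)\,l_a+B(z\cdot l)\,z_a$ is unjustified: a term along $t_a$, with coefficients depending also on $t\cdot z$, is allowed and actually present. Concretely,
\begin{equation*}
 l^aK_a(l,z)=-\,\mathrm{PV}\!\int\frac{d^2l'}{z\cdot l'}
 =-\frac{4\pi}{\sqrt{(t\cdot z)^2+1}}\,
 \artanh\frac{t\cdot z}{\sqrt{(t\cdot z)^2+1}}\,,
\end{equation*}
which vanishes only when $t\cdot z=0$; your symmetry argument for its vanishing (``a Lorentz scalar of $z$ alone, odd under $z\to-z$'') presupposes exactly the invariance that fails here. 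Since your final formula $K_a=4\pi\log|z\cdot l|\,l_a/(z\cdot l)$ would force $l^aK_a=0$, it is wrong; the explicit computation you invoke establishes only the single contraction $z^aK_a$, and only for $z$ in the $e_0$--$e_3$ plane, which can neither reconstruct $K_a$ nor detect the $t$-dependent pieces. Whether those spurious pieces cancel after integration against $J_\as$ (using oddness, \eqref{contzero}, \eqref{Jdot}) is precisely what would have to be proved and is not. A secondary weakness is that the Fubini interchange through the principal value at $z\cdot l'=0$ is asserted rather than justified (the double integral is not absolutely convergent there).

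For comparison, the paper avoids all of this by never working at null argument with a subtraction: it reads \eqref{Phi} as a homogeneous function $\Phi_V(x)$ on the closed future cone, computes it for timelike $x$, $v=x/\sqrt{x^2}$, where the exchanged kernel $\int d^2l'/[(v\cdot l')(z\cdot l')]$ is convergent, homogeneous of degree $-2$, hence genuinely invariant, and equal to $\dfrac{4\pi}{\sqrt{(v\cdot z)^2+1}}\log\big[\sqrt{(v\cdot z)^2+1}+v\cdot z\big]$; only then is the boundary limit $x\to l$ taken, after a splitting of the integrand in which \eqref{Jdot} and \eqref{contzero} dispose of the unwanted terms. If you wish to keep your architecture, regularize in the same spirit---move $l$ slightly inside the cone rather than subtracting $l'_a$ in the numerator---so that the invariance argument for the kernel becomes legitimate; after a splitting of the type the paper uses, the limit yields your intermediate formula, and your concluding integration by parts then finishes the proof.
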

\begin{proof}
We observe that the formula \eqref{Phi} defines in fact a
continuous, homogeneous function $\Phi_V(x)$ for $x$ in the closed
future lightcone. For $x$ inside the cone, and with
$v=x/\sqrt{x^2}$, one finds
\begin{equation}\label{Phix}
 \Phi_V(x)=-\int\frac{v\cdot J_\as(z)}{\sqrt{(v\cdot z)^2+1}}
 \log\Big[\sqrt{(v\cdot z)^2+1}+v\cdot z\Big]\, d\nu(z)\,,
\end{equation}
where we used the following formula valid for $v^2=1$, $v^0>0$
and $z^2=-1$:
\begin{equation}
 \int\frac{d^2l}{v\cdot l\,z\cdot l}=
 \frac{4\pi}{\sqrt{(v\cdot z)^2+1}}
 \log\Big[\sqrt{(v\cdot z)^2+1}+v\cdot z\Big]\,.
\end{equation}
We observe that $\delta_a^{(z)}(v\cdot z)=v_a+v\cdot z\, z_a$,
which allows us to write the integrand in \eqref{Phix} as
\begin{multline}
 \tfrac{1}{2}J_\as(z)\cdot\delta\left(\log\Big[\sqrt{(v\cdot z)^2+1}+v\cdot z\Big]\right)^2\\
 +z\cdot J_\as(z)\bigg[1-\frac{|v\cdot z|}{\sqrt{(v\cdot
 z)^2+1}}\bigg]\log\Big[\sqrt{(v\cdot z)^2+1}+|v\cdot z|\Big]\\
 -z\cdot J_\as(z)\log\tfrac{1}{2}\Big[\sqrt{(x\cdot z)^2+x^2}+|x\cdot z|\Big]
 +z\cdot J_\as(z)\log\tfrac{1}{2}\sqrt{x^2}\,,
\end{multline}
where we used the fact that
$\xi\log\big[\sqrt{\xi^2+1}+\xi\big]=|\xi|\log\big[\sqrt{\xi^2+1}+|\xi|\big]$.
The first and the last terms give no contribution to the
integral (use \eqref{Jdot} and \eqref{contzero} respectively).
We consider the other terms in the limit $x\to l$. In this
limit \mbox{$|v\cdot z|$} tends to $+\infty$ almost everywhere,
and the second term remains bounded by \mbox{$\con|z\cdot
J_\as(z)|$} and tends to zero almost everywhere, so the
contribution to the integral vanishes in this limit. Finally,
the third term gives the thesis.
\end{proof}

The above result has an interesting consequence.
\begin{prop}\label{Jzg}
 Let $J\in\Sc^3_{3+\ep}(R_{\ga,\delta})$ be a conserved current
 with an odd asymptote $J_\as$ and the corresponding function $V(s,l)$.
 Let  $L_{[ab}\Delta V_{c]}(l)=0$, so that $V\in\V_\ep$.

 Then there exists a current $J'$ of the same type, but whose
 asymptote is of the particular form $J'_\as(z)=zg(z)$, such
 that the corresponding function $V'(s,l)$ satisfies
 \begin{equation}
  l\wedge (V'\fm V)(s,l)=0\quad\quad\text{and}\quad\quad
  \Phi_{V'}(l)=\Phi_V(l)\,.
 \end{equation}
 Thus, in particular, $V'(s,l)-V(s,l)\in\V^0_\as$ and
 $[V']=[V]\in L$.
\end{prop}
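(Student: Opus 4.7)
The strategy is to identify the scalar $g$ through Proposition~\ref{PhiJ} so that $\Phi_{V'}=\Phi_V$ automatically, and then to exploit linearity of the construction $J\mapsto V$ together with the factorized form~\eqref{VrhoH} of $V_\rho$ in order to make $V'-V$ a pure $l$-longitudinal term for a carefully chosen representative $J'$ of the asymptote $zg$.

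First I set $g(z)=-z\cdot J_\as(z)$, extended homogeneously of degree $-4$. The oddness of $J_\as$ makes $g$ even, the support lies in $R_\delta$, and $\int g\,d\nu=0$ by~\eqref{contzero}. Hence $J'_\as(z)=z g(z)$ is odd, falls under the particular form~\eqref{zg}, and (by the theorem) satisfies the integrability condition $L_{[ab}\Delta V'_{c]}=0$ automatically. Applying Proposition~\ref{PhiJ} to $J'$, and using $z^2=-1$ on $H_-$,
\[
 \Phi_{V'}(l)=\int z\cdot J'_\as(z)\log|z\cdot l|\,d\nu(z)
 =-\int g(z)\log|z\cdot l|\,d\nu(z)=\Phi_V(l).
\]
By~\eqref{VPhi} this already yields $l\wedge(\Delta V'-\Delta V)=0$, so $\Delta V'-\Delta V=l\,\alpha(l)$ for some homogeneous scalar $\alpha$.

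The stronger pointwise condition $l\wedge(V'-V)(s,l)=0$ would not hold, however, for an arbitrary conserved current representing $zg$; here the choice of representative matters. I would take
\[
 J':=J'_\rho+(J-J_\rho),\qquad J_\rho=\rho*J_\as,\quad J'_\rho=\rho*J'_\as,
\]
with $\rho$ as in~\eqref{Jrho}. Each summand is a conserved current in $\Sc^3_{3+\ep}(R_{\ga,\delta})$, and the asymptote of $J-J_\rho$ vanishes, so $J'_\as=zg$ as required. Since the construction $J\mapsto V$ of the preceding theorem is linear in $J$, one has $V'=V'_\rho+V-V_\rho$, whence by~\eqref{VrhoH} applied on both sides
\[
 V'(s,l)-V(s,l)=V'_\rho(s,l)-V_\rho(s,l)
 =\tfrac{1}{2}H(s,l)(\Delta V'-\Delta V)(l)
 =\tfrac{1}{2}H(s,l)\,\alpha(l)\,l,
\]
which is manifestly proportional to $l$, giving $l\wedge(V'-V)=0$ pointwise in $(s,l)$.

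The main obstacle is the bookkeeping needed to verify that the superposition $J'=J'_\rho+(J-J_\rho)$ inherits all the hypotheses of the preceding theorem (conservation, support, decay, odd asymptote, condition~\eqref{DV}), so that both $V$ and $V'$ sit in $\V_\ep$ and~\eqref{VrhoH} applies to $V_\rho$ and $V'_\rho$ on the same footing; smoothness of the homogeneous extension of $g$ off $H_-$ is a small additional point, inherited from that of $J_\as$ away from the origin. Once these routine checks are in place, combining $\Phi_{V'}=\Phi_V$ with $l\wedge(V'-V)=0$ locates $V'-V$ in $\V^0_\as$ (with the integer $n=0$ in~\eqref{V0as}), and hence $[V']=[V]\in L$.
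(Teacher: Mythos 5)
Your proposal is correct and follows essentially the same route as the paper: the same choice $g(z)=-z\cdot J_\as(z)$, the same representative $J'=J+\rho*(J'_\as-J_\as)$ (your $J'_\rho+(J-J_\rho)$ is identical), and the same use of \eqref{VrhoH} together with Proposition~\ref{PhiJ} and \eqref{VPhi} to conclude $l\wedge(V'-V)=0$ and $\Phi_{V'}=\Phi_V$. The extra bookkeeping you flag (conservation, support, decay of the modified current) is exactly what the paper also leaves as routine, resting on the Lemma about $\rho*J_\as$.
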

\begin{proof}
We set $J'=J+\rho*(J'_\as-J_\as)$, where $J'_\as$ is
homogeneous of degree $-3$ and on the unit hyperboloid given by
$J'_\as(z)=-z\,z\cdot J_\as(z)$, which then indeed is the
asymptote of $J'$. Then by \eqref{VrhoH} there is
\begin{equation}
 (V'\fm V)(s,l)=H(s,l)\tfrac{1}{2}(\Delta V'\fm \Delta V)(l)
\end{equation}
  and by
Proposition~\ref{PhiJ}: $\Phi_{V'}(l)=\Phi_V(l)$. Therefore
$l\wedge(\Delta V'\fm\Delta V)(l)=0$, which completes the proof.
\end{proof}

The net result of the present section to this point is the
identification of a~class of currents giving rise to test
elements $[V]\in L$ of our electromagnetic Weyl algebra. Now we
want to show that the whole group $L$ is covered in this way,
and even more, that the class may be still narrowed. We start
with an auxiliary result.
\begin{lem}
Let a smooth function $W(s,l)$ be homogeneous of degree $n-2$,
$W(\mu s,\mu l)=\mu^{n-2}W(s,l)$ ($\mu>0$), and satisfy the falloff
conditions
\begin{equation}
 |L_{b_1c_1}\ldots
 L_{b_kc_k}W(s,l)|\leq\con(k)\,\frac{(t\cdot l)^{n-2}}
 {(1+|s|/t\cdot l)^\ep}\,,\quad k\in\mN\,.
\end{equation}
Denote $W^{(k)}(s,l)=\p_s^kW(s,l)$ and set
\begin{equation}
 K(x)=-\frac{1}{2\pi}\int W^{(n)}(x\cdot l,l)\,d^2l\,.
\end{equation}
Then for each fixed $\delta\in(0,1)$ one has in the region
$R_\delta$ the bounds
\begin{equation}
 |K(a+x)|\leq\con(\delta)\,(1+|x|)^{-n-\ep}\,.
\end{equation}
\end{lem}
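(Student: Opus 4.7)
For $|x|$ bounded, continuity of $K$ delivers the bound trivially, so I restrict to $|x|$ large; set $y := a+x$. By \eqref{loreu}, $y$ is spacelike with $\rho := \sqrt{-y^2}\asymp|x|$. Using Lorentz invariance of the null-cone measure $d^2l$ — where the timelike reference vector $t$ appearing in the falloff hypothesis transforms along with $l$ — I reduce to $y=(0,0,0,\rho)$. The required rotation has rapidity uniformly bounded over $R_\delta$, so the constants in the falloff bound remain under control. Take $t=e_0$ and parametrize $l=e_0+\vec l$, $|\vec l|=1$, by spherical coordinates with polar axis $\hat y$; substituting $u=\cos\theta$ one has $t\cdot l=1$, $y\cdot l=-\rho u$, and $d^2l=du\,d\phi$, so that
\begin{equation*}
 K(y)=-\frac{1}{2\pi}\int_{-1}^{1}\!\int_0^{2\pi} W^{(n)}(-\rho u,\,l(u,\phi))\,du\,d\phi.
\end{equation*}

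From the hypothesis combined with the homogeneity of $W$, iterated use of the Euler identity $s\p_s W+l\cdot\p_l W=(n-2)W$ (rewriting $l\cdot\p_l W=-(t\cdot l)^{-1}t^b l^a L_{ab}W$) yields the \emph{improved} bound $|\p_s^k W(s,l)|\le\con(t\cdot l)^{n-2-k}(1+|s|/t\cdot l)^{-k-\ep}$ for every $k$, and likewise for $L$-derivatives of $\p_s^k W$. I then integrate by parts in $u$ using the chain rule
\begin{equation*}
 W^{(n)}(-\rho u,l(u,\phi))=-\frac{1}{\rho}\p_u\bigl[W^{(n-1)}(-\rho u,l(u,\phi))\bigr]+\frac{1}{\rho}\eta^a(\p_{l^a}W^{(n-1)})(-\rho u,l(u,\phi)),
\end{equation*}
with $\eta^a:=\p_u l^a$. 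Since $l\cdot\eta=0$, the tangential piece equals $-\eta^a L_{0a}W^{(n-1)}$, which inherits the same falloff estimates as $W^{(n-1)}$. The boundary term at $u=\pm 1$ is bounded by $\rho^{-1}\,|W^{(n-1)}(\mp\rho,l_\pm)|\le\con\rho^{-1}(1+\rho)^{-(n-1)-\ep}=O(\rho^{-n-\ep})$, exactly the desired order. Iterating this step on the correction (at each stage invoking $L_{0a}W^{(k)}=\p_s L_{0a}W^{(k-1)}$ to peel off another factor $1/\rho$) yields, after $n$ iterations, a sum of boundary terms each of size $O(\rho^{-n-\ep})$ plus a final bulk integral of order $\rho^{-n}\int_{-1}^{1}(1+\rho|u|)^{-\ep}(1-u^2)^{-1/2}du=O(\rho^{-n-\ep})$ for $\ep<1$ (with the obvious modifications for $\ep\ge1$).

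\textbf{Main obstacle.} The vector $\eta^a=\p_u l^a$ has Euclidean norm $1/\sqrt{1-u^2}$, an integrable but not square-integrable singularity at the poles $u=\pm 1$ where the spherical chart degenerates. A naive iteration of the IBP produces, at stage $j\ge 2$, boundary evaluations of products $(\p_u l^{a_1})\cdots(\p_u l^{a_j})L_{0a_1}\!\cdots L_{0a_j}W^{(n-j)}$ containing singular factors $(1-u^2)^{-j/2}$ that are formally divergent. The technical heart of the argument is to verify that at each IBP step \emph{only one} new $\eta$-factor is introduced (the preceding tangential $\eta$-factors being absorbed into $L$-derivatives of lower $\p_s$-order, which inherit the same $(1+|s|)^{-\ep}$ falloff from the hypothesis), so that the singular factor at any single stage is only $1/\sqrt{1-u^2}$ and integrable; equivalently, one introduces a smooth cutoff $\chi_\delta$ vanishing on $|1-u^2|<\delta^2$, carries out all IBPs with the boundary terms suppressed, and verifies via dominated convergence that the divergent ``boundary'' and ``bulk'' pieces cancel as $\delta\to 0$. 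Assembling these estimates yields the bound $|K(a+x)|\le\con(\delta)(1+|x|)^{-n-\ep}$.
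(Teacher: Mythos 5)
Your reduction to $y=(0,0,0,\rho)$ and the first integration by parts are sound, but there is a genuine gap exactly at the place you yourself flag as ``the technical heart'': the iteration of the integration by parts in $u$. After the first step the bulk term is $\rho^{-1}\!\int\eta^a(\p_{l^a}W^{(n-1)})(-\rho u,l)\,du\,d\phi$, and to repeat the step you must differentiate the whole integrand in $u$; the chain rule then unavoidably produces either $\p_u\eta\sim(1-u^2)^{-3/2}$ (from the coefficient) or a second tangential derivative with coefficient $\eta^a\eta^b\sim(1-u^2)^{-1}$, and the new boundary terms contain $\eta(\pm1)=\infty$. Your claim that only one new $\eta$-factor appears per step, the earlier ones being ``absorbed into $L$-derivatives of lower $\p_s$-order'', is not substantiated and, as stated, conflicts with this computation: $\eta$ is a function of the chart, not of $W$, and its singular $u$-dependence reappears as soon as you integrate by parts again. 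The alternative you sketch (cutoff $\chi_\delta$ plus an asserted cancellation of the divergent pieces as $\delta\to0$) is precisely the statement that would need a proof. A secondary point: the ``improved bound'' on $\p_s^kW$ does not follow from the Euler identity alone; $s\p_sW=(n-2)W-(t\cdot l)^{-1}t^bl^aL_{ab}W$ gives decay only for $|s|\gtrsim t\cdot l$, and near $s=0$ you must invoke smoothness together with homogeneity and compactness of the set of null directions (harmless, since the constants may depend on $W$, but it should be said).

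The paper's proof avoids the polar-chart singularity altogether by integrating by parts covariantly on the cone: with $\tilde{x}=(t\cdot l)^{-1}x+(t\cdot l)^{-2}t\cdot x\,l$ one has
\begin{equation*}
 L_{ab}\Big[t^a\tilde{x}^bW^{(n-1)}(x\cdot l,l)\Big]
 = x^2W^{(n)}(x\cdot l,l)
 +\Big[t^a\tilde{x}^bL'_{ab}+\frac{x\cdot l}{(t\cdot l)^2}\Big]W^{(n-1)}(x\cdot l,l)\,,
\end{equation*}
and since $\int L_{ab}f(l)\,d^2l=0$ by \eqref{parts}, an induction on $n$ combined with $|x^2|\geq\con(\delta)|x|^2$ in $R_\delta$ (cf.\ \eqref{loreu}) and the trivial uniform bound gives the result; the operation involved is globally smooth on the sphere of null directions, so no pole problem arises, and no bounds on $\p_s^kW$ are ever needed. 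If you wish to stay within your one-dimensional setup, the natural repair is to integrate by parts against the smooth field $(1-u^2)\p_u$ (a boost generator, i.e.\ an $L_{ab}$-combination), whose weight kills the $(1-u^2)^{-1/2}$ singularities; carrying this out essentially reproduces the paper's identity. As it stands, however, the key estimate in your proposal is asserted rather than proved.
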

\begin{proof}
It is sufficient to show this for $a=0$, as the properties of
$W$ are conserved under translations. For $n=0$ and $x\in R_\delta$,  we have
\begin{equation*}
 |K(x)|\leq\con\int_{-1}^1
 \frac{du}{\big(1+\big||x^0|+|\vec{\vspace{1pt}x}|u\big|\big)^\ep}
 \leq\con(\delta)\,(1+|x|)^{-\ep}\,,
\end{equation*}
We proceed by induction with respect to $n$. If we denote
\[
 \tilde{x}(t,l)=(t\cdot l)^{-1}x+(t\cdot l)^{-2}t\cdot x\,l\,,
 \]
then we have the identity
\begin{multline}
 L_{ab}\Big[t^a\tilde{x}^bW^{(n-1)}(x\cdot l,l)\Big]\\
 = x^2W^{(n)}(x\cdot l,l)
 + \left[t^a\tilde{x}^b L'_{ab}
 +\frac{x\cdot l}{(t\cdot l)^2}\right]W^{(n-1)}(x\cdot l,l)\,,
\end{multline}
where $L'_{ab}W^{(n-1)}(x\cdot
l,l)=L_{ab}W^{(n-1)}(s,l)|_{s=x\cdot l}$. The integral of the
l.h.s.\ over $l$ vanishes, so by induction we have
\begin{multline}
 |K(x)|\leq\min\left\{\con,\con(\delta)\frac{|x|}{|x^2|}(1+|x|)^{-n+1-\ep}\right\}\\
 \leq
 \con(\delta)(1+|x|)^{-n-\ep}\,.
\end{multline}
\end{proof}

We can now prove our main result of this subsection.
\begin{thm}\label{AfromJ}
 Let $A$ be given by the formula \eqref{freeV} with
 $V\in\V_\ep$, and chose an arbitrary set of the type $R_{\ga,\delta}$.
 Then:\\
 \hspace*{1em}(i) There exists $V'\in\V_\ep$ such that
$[V']=[V]$ and the corresponding potential $A'$ may be
represented as a radiation potential of a test current \linebreak
$J'\in\Sc^3_{3+\ep}(R_{\ga,\delta})$ with the asymptote of the form $J'_\as(x)=x\rho(x)$, with \linebreak
\mbox{$\rho(-x)=\rho(x)$}, supported in~$R_\delta$.\\
 \hspace*{1em}(ii) The test current $J'$ may be represented
as a sum of currents with the same properties, but in addition
each of the currents is supported in a fattened symmetrical
spacelike cone contained in $R_{\ga,\delta}$. For each cover of
the set $R_{\ga,\delta}$ with such cones there is a
corresponding split of $J'$.
\end{thm}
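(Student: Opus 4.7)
The plan for (i) is to separate the infrared-singular and infrared-regular contributions to $V$. Given $V\in\V_\ep$, I would first construct an odd function $J_\as$ on $H_-$ of the form $zg(z)$ with $g$ smooth, even, homogeneous of degree $-4$, supported in $R_\delta$, satisfying $\int g\,d\nu=0$, and whose transform $\int zg(z)/(z\cdot l)\,d\nu(z)$ (principal value) reproduces $\Delta V(l)$. This amounts to inverting an integral operator on $H_-\cap R_\delta$ onto the space of asymptotes satisfying \eqref{DV}. Setting $J_\rho=\rho*J_\as$ as in \eqref{Jrho} then yields a current in $\Sc^3_{3+\ep}(R_{\gamma,\delta})$ with the desired asymptote form, generating $V_\rho(s,l)=H(s,l)\tfrac12\Delta V(l)$ by \eqref{VrhoH}.

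The remainder $V_\reg=V-V_\rho$ is infrared-regular ($\Delta V_\reg=0$). I would realize it by $J_\reg^a=(4\pi)^{-1}\partial_b(\chi F_\reg^{ab})$, with $F_\reg$ the Maxwell tensor of the potential $A_\reg$ generated by $V_\reg$ via \eqref{freeV}, and $\chi$ a smoothed cutoff interpolating between $0$ and $1$ across a fattening slab of half-width $\sigma(\vec x)\sim\gamma+\delta|\vec x|$, so that $\supp J_\reg\subset R_{\gamma,\delta}$. Conservation is automatic from the antisymmetry of $F_\reg$. Applying the preceding lemma to $V_\reg$ gives $|D^\be A_\reg(a+x)|\leq\mathrm{const}\,(1+|x|)^{-1-\ep-|\be|}$ on $R_\delta$, and the fattening ensures $|D^\be\chi|\lesssim(1+|\vec x|)^{-|\be|}$; together these produce $|D^\be(H_3 J_\reg)|\leq\mathrm{const}\,(1+|x|)^{-3-\ep-|\be|}$, placing $J_\reg$ in $\Sc^3_{3+\ep}(R_{\gamma,\delta})$ with vanishing asymptote. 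Then $J'=J_\rho+J_\reg$ fulfills the conclusions of (i), with the even scalar $\rho$ of the theorem statement identified as $g$.

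For (ii), I would first split $g$ on $H_-$ using a smooth partition of unity $\{\vp_\al\}$ subordinate to $\{\Lambda_\al\cap H_-\}$, each $\vp_\al$ chosen even under $z\to-z$, with $g_\al=\vp_\al g$ modified by a finite redistribution to enforce $\int g_\al\,d\nu=0$ (possible because $\sum\int g_\al\,d\nu=\int g\,d\nu=0$). Each $J_{\rho,\al}=\rho*(zg_\al(z))$ is then supported in $\Lambda_\al$ (for $\rho$ with small enough support) and carries the asymptotic part. The remainder $J'-\sum_\al J_{\rho,\al}$ is conserved with vanishing asymptote; write it as $\partial_b M^{ab}$ for an antisymmetric $M$ supported in $R_{\gamma,\delta}$ (constructed for instance from the primitive $M^{0i}(x)=-\int_{-\infty}^{x^0}{J'}^i(t,\vec x)\,dt$ adjusted by a three-dimensional divergence-free corrector), choose a smooth partition of unity $\{\chi_\al\}$ on $R_{\gamma,\delta}$ subordinate to $\{\Lambda_\al\}$ and symmetrized through each $p_\al$, and set $J_{\reg,\al}^a=\partial_b(\chi_\al M^{ab})$: conservation is automatic by antisymmetry of $M$, support lies in $\Lambda_\al$, and the pieces sum to $J'-\sum_\al J_{\rho,\al}$. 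Combining, $J'_\al=J_{\rho,\al}+J_{\reg,\al}$ gives the desired split. The principal obstacles are the surjectivity of $g\mapsto\int zg(z)/(z\cdot l)\,d\nu(z)$ together with the decay estimates on $J_\reg$ in (i), and in (ii) the production of an antisymmetric primitive $M$ localized in $R_{\gamma,\delta}$ without disturbing the prescribed asymptote.
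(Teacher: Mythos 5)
Your plan has genuine gaps at the points you yourself flag as ``principal obstacles'', and these are precisely where the real work lies. For the singular part of (i) you assume that the map $g\mapsto\int zg(z)/(z\cdot l)\,d\nu(z)$ is surjective onto the admissible tails $\Delta V$ (with $g$ even, supported in $R_\delta\cap H_-$, $\int g\,d\nu=0$), but you give no argument; this inverse problem is never solved in the paper and need not be attacked at all. The paper's route is constructive in the opposite direction: from $V$ one builds an antisymmetric superpotential $B^{ab}=C^at^b-C^bt^a$ with $\Box B^{ab}=0$ and $A^a=\p_bB^{ab}$ (Eq.\ \eqref{C}), cuts it with the odd homogeneous function $F$ to get $\vp^{ab}=\Box(FB^{ab})$ supported in $R_{\ga,\delta}$, and takes $J^a=\p_b\vp^{ab}$; the asymptote then comes out automatically, and Proposition \ref{Jzg} (resting on Proposition \ref{PhiJ}) replaces it by the radial form $zg(z)$ while changing $V$ only within $[V]$. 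Your construction of the regular part is also flawed: the Maxwell tensor $F_\reg^{ab}$ is not a superpotential for $A_\reg$ (it satisfies $\p_bF_\reg^{ab}=0$, not $=A_\reg^a$), so the current $(4\pi)^{-1}\p_b(\chi F_\reg^{ab})$, though conserved and supported in the slab, generates $\p_b\big[D*(\chi F_\reg^{ab})\big]$, which is \emph{not} $A_\reg$ — a Kirchhoff-formula (or plane-wave) check shows it reproduces only the ``$D\,\p_0 A$'' half of the Cauchy data and adds spurious waves, not even a gauge transform of $A_\reg$. To make the cutting trick work one must cut a wave-equation solution whose divergence \emph{is} the potential, i.e.\ use $\p_b\Box(F B^{ab})$; and to verify the required decay of the resulting current (membership in $\Sc^3_{3+\ep}$) one needs the auxiliary falloff Lemma for $C$, which your sketch does not supply.

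In (ii) the same missing superpotential resurfaces. You need an antisymmetric $M^{ab}$ with $\p_bM^{ab}$ equal to the regular remainder \emph{and} supported in $R_{\ga,\delta}$; the naive primitive $M^{0i}(x)=-\int_{-\infty}^{x^0}J'^i\,dt$ does not vanish above the slab, so the pieces $\p_b(\chi_\al M^{ab})$ neither stay inside the cones in the required way nor sum back to the remainder where $\sum\chi_\al\neq1$. In the paper this object is already in hand: it is $\vp^{ab}_\reg=\vp^{ab}-\rho*\vp^{ab}_\as$ from step (i), and the split is simply $J'_{\reg,k}=\p_b(f_k\vp^{ab}_\reg)$. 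Likewise your ``finite redistribution'' to enforce $\int g_\al\,d\nu=0$ for each localized singular piece is vague (the corrections must stay inside each patch, remain even, and sum to zero), whereas the paper gets this constraint for free: each localized asymptote is a divergence on the hyperboloid, $\int z\cdot J'_{\as,k}\,d\nu=\int\delta_b\big(z_c\vp^{cb}_{\as,k}\big)\,d\nu=0$ by \eqref{zsto}, so conservation of every piece follows from \eqref{contzero} with no adjustment. In short, the decomposition into ``tail plus regular part'' is the right instinct, but without the superpotential construction both halves of your argument rest on unproved existence statements, one of which (the $F_\reg$-based current) is actually false as stated.
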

\begin{proof}
For a given $A$ and $V$, we define
\begin{equation}\label{C}
 C^a(x)=-\frac{1}{2\pi}\int\frac{V^a(x\cdot l,l)}{t\cdot l}\,d^2l\,,\quad
 B^{ab}=C^at^b-C^bt^a\,.
\end{equation}
Then $\Box B^{ab}(x)=0$ and $A^a(x)=\p_bB^{ab}(x)$. Moreover,
with the use of the above lemma one finds easily that for $x\in R_\delta$, there is
\begin{equation}\label{falloffC}
 |D^\al H_0C(a+x)|\leq\con(a,\delta,\al)(1+|x|)^{-|\al|-\ep}\,.
\end{equation}

Let now $F$ be a smooth function on the spacetime which for
$|x|\geq\ga$,
for some $\ga>0$, satisfies:\\
\hspace*{1em} (i) $F(\mu x)=F(x)$ for all $\mu\geq1$ (homogeneity),\\
\hspace*{1em} (ii) $F(-x)=-F(x)$,\\
\hspace*{1em} (iii) $F(x)=1/2$ for $x^0\geq \delta|\vec{\vspace{1pt}x}|$ for
some $\delta\in(0,1)$.\\
Note that the supports of derivatives of $F$ are contained in
$R_{\ga,\delta}$.  We claim that
\begin{equation}
 B^{ab}(x)=4\pi\int D(x-y)\vp^{ab}(y)\,d^4y\,,
\end{equation}
where $\vp^{ab}(y)=\Box(F(y)B^{ab}(y))$. Indeed, the support of
$\vp$ is contained in $R_{\ga,\delta}$, and for $x$ in the
future of $R_{\ga,\delta}$ the r.h.s. may be written as
\begin{equation*}
 4\pi\int D_\mathrm{ret}(x-y)\Box\Big([F(y)+\tfrac{1}{2}]B^{ab}(y)\Big)\,d^4y\,,
\end{equation*}
which yields the l.h.s. upon integration by parts. But both
sides satisfy the wave equation, so the equality holds
everywhere.

The fall-off properties \eqref{falloffC} now easily imply that
$\vp\in\Sc^2_{2+\ep}$. Moreover, the support of $\vp$ is contained
in $R_{\ga,\delta}$ and that of the asymptote $\vp_\as$ in
$R_\delta$, and the asymptote is even: $\vp_\as(-x)=\vp_\as(x)$.
The potential $A$ has now the representation \eqref{freeJ} with
the test current $J^a=\p_b\vp^{ab}$, which is an element of
$S^3_{3+\ep}$, has similar support properties as $\vp$, and its
asymptote is odd: $J_\as(-x)=-J_\as(x)$. Thus $J$~satisfies all
the assumptions of the Proposition~\ref{Jzg}. The current $J'$
defined in the proof of this proposition may be written in the
present case as $J'=J'_\reg+J'_\sing$ with
\begin{gather}
 J^{'a}_\reg=\p_b\vp^{ab}_\reg\,,\quad
 \vp_\reg=\vp-\rho*\vp_\as\,,\label{J'reg}\\
 J'_\sing=\rho*J'_\as\,,\quad
 J'_\as(x)=x\,\bigg(\frac{x_c\p_b\vp_\as^{cb}(x)}{x^2}\bigg)\,.\label{J'sing}
\end{gather}
This completes the proof of (i).

To show (ii), we apply the above construction to
$R_{\ga',\delta'}$ with $\ga'<\ga$, $\delta'<\delta$, and note that
the two parts $J'_\reg$ and $J'_\sing$ may be considered
separately. For the first part we note a rather obvious fact: for
each cover of $R_{\ga',\delta'}$ with open fattened symmetrical
spacelike cones contained in $R_{\ga,\delta}$ there exist a
decomposition of unity on $R_{\ga',\delta'}$ with smooth functions
$f_k$ supported in the respective fattened symmetrical cones, taking values
in $\<0,1\>$ and with bounded all derivatives. The currents
\mbox{$J^{'a}_{\reg,k}=\p_b(f_k\vp_\reg^{ab})$} satisfy the
thesis. For the second part we note that the intersection of $H_-$
with $R_{\delta'}$ may be covered by arbitrarily small symmetrical
patches, which are open as subsets of $H_-$ and are contained in
$R_\delta$. For each such cover there exists a corresponding
decomposition of unity on $R_{\delta'}\cap H_-$ with smooth, even
functions $g_k(z)$ supported in the respective patches, taking
values in $\<0,1\>$ and with bounded derivatives. We extend these
functions by homogeneity and define
\begin{equation}
 J'_{\sing,k}=\rho*J'_{\as,k}\,,\quad
 J'_{\as,k}(x)=x\,
 \bigg(\frac{x_c\p_b\vp_{\as,k}^{cb}(x)}{x^2}\bigg)\,,\quad
 \vp_{\as,k}^{ab}=g_k\vp_\as^{ab}\,.\label{Jk'sing}
\end{equation}
The asymptotes $J'_{\as,k}$ are odd and satisfy
\begin{equation}
 \int z\cdot J'_{\as,k}(z)\,d\nu(z)=
 \int\delta_b\left(z_c\vp_{\as,k}^{cb}(z)\right)\,d\nu(z)=0
\end{equation}
by \eqref{zsto}, so $J'_{\sing,k}$ are conserved currents by
\eqref{contzero}. Their sum yields $J'_\sing$, which ends the
proof.
\end{proof}

\setcounter{equation}{0}

\section{Localization of Dirac fields and
observables}\label{locdir}

Fields carrying charge do not represent observables. Even more, in
full electrodynamics they undergo local gauge transformations,
thus to form an observable with the use of them one has to
compensate not only the global, but also local gauge scaling. If
$\Psi(x)$ and $A(x)$ represent `local quantum spacetime fields',
then a way to achieve this is to give a precise meaning (by
smearing, renormalization etc.) to the heuristically formed
quantities
 $\ov{\Psi}(x)\exp\bigg(-ie\int\limits_x^yA(z)dz\bigg)\Psi(y)$.
Localization of this quantity, if it can be defined, should be
determined by spacetime points $x$ and $y$ and the integration
path between them.

Single fields creating or annihilating a physical charged
particle, on the other hand, interpolate between different
representations of observables. However, because of the Gauss
law they cannot be local. Staying at the adopted heuristic
level, the best that one can do is to cut the above quantity in
two and obtain
 $\exp\bigg(-ie\int\limits_\infty^yA(z)dz\bigg)\Psi(y)$,
where the path goes to spacelike infinity. The expectation then
would be that the effect of this operation is invisible in the
region spacelike to the localization of the integration path.

The above naive picture has its more refined counterpart in the
algebraic analysis of the superselection sectors in quantum
electrodynamics made by Buchholz~\cite{bu82}. The idea behind the
selection criterion adopted in this analysis is that by an
appropriate choice of the `radiation cloud' superimposed on a
charged state one can concentrate at a given time the electric
flux at spacelike infinity in an arbitrarily chosen patch on the
2-sphere in the infinity of 3-space. The causal influence of the
presence of the charge in this state may be thus made to vanish in
the causal complement of some spacelike cone in Minkowski space.

We shall now investigate this question in the model defined here.
Our algebra is an algebra of fields, not only observables, thus we
formulate the problem in their terms. We shall ask whether, in
representations defined in Section \ref{algebra}, by composing the
charged field $\pi(\Psi(f))$ with some radiation cloud and a
subsequent rescaling (to push the cloud to spacelike infinity), one
can obtain a modified field restricted to a fattened symmetrical
spacelike cone. The infrared tails are symmetric in the class of
fields considered in the model, thus the replacement of spacelike
cones by fattened symmetrical spacelike cones is unavoidable.

We shall see that the answer to this question is negative for a
rather general construction reflecting in an obvious way the
above idea. This seems to disagree
also with expectations based on perturbative calculations in QED.
The `perturbative axiomatic' construction of the physical state
space by Steinmann \cite{ste} may be seen as the strongest
indication in this direction. We postpone the discussion of this
point to the concluding section.

On the other hand, the same construction will allow us to construct
local observables formed as products of `dressed' Dirac fields
and their adjoints.

\subsection{Spacelike test functions}

To ascribe localization to elements $\Psi(f)$, we first have to
interpret test functions in spacetime terms; this will be done in
this subsection. However, this will not give the full answer to
the question because of noncommutativity with observables $W(V)$.
We treat then the addition of the clouds in further subsections.

The first step is achieved, in analogy to the electromagnetic
case, by representing the classical test field $\psi$ in
\eqref{freef} as
\begin{equation}\label{freeX}
 \psi(x)=\frac{1}{i}\int S(m,x-y)\chi(y)\,d^4y\,,
\end{equation}
where $\chi$ is a classical test 4-spinor field and
$S(m,x)=(i\gamma\cdot\p+m)D(m,x)$. We want the support $\chi$
to be contained between two Cauchy surfaces.

It is easy to show that the Fourier representation of $S(m,x)$
may be written as
\begin{equation}
 S(m,x)=i\left(\frac{m}{2\pi}\right)^3\int
 e^{\txt-imx\cdot v\,\gamma\cdot v}\gamma\cdot v\,d\mu(v)
\end{equation}
and then the Fourier connection between $f(v)$ and $\chi(x)$ in
the integral representations of the Dirac field $\psi$ given
respectively by \eqref{freef} and \eqref{freeX} takes the form
\begin{equation}\label{fchi}
 f(v)=\left(\frac{m}{2\pi}\right)^{3/2}\int
 e^{\txt imv\cdot x\,\gamma\cdot v}\chi(x)\,d^4x\,.
\end{equation}
It is clear that if $\chi\in\Sc(\M)$, the Schwartz functions
space, then $f\in\Sc(H_+)$. For the converse statement we note
first the following analogue of the `regular wave packet'
property.
\begin{prop}
 If $f\in\Sc(H_+)$, then for each $\delta\in(0,1)$ the Dirac
 field $\psi$ formed by \eqref{freef} satisfies in the region $R_\delta$ the bounds
\begin{equation}
 |D^\beta \psi(x)|\leq\con(\delta,|\beta|,n)(1+|x|)^{-n}
\end{equation}
for each $\beta$ and each $n\in\mN$.
\end{prop}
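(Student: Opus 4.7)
The plan is a non-stationary phase argument on the mass hyperboloid $H_+$. For $x$ spacelike the tangential gradient along $H_+$ of the phase $x\cdot v$ is nonvanishing, and for $x\in R_\delta$ its magnitude is bounded below by a multiple of $|x|$ uniformly in $v$; iterated integration by parts via \eqref{vsto} will therefore produce arbitrarily fast power-law decay.

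First I would reduce to the case $\beta=0$. Since $v_a\gamma\cdot v$ commutes with $x\cdot v\,\gamma\cdot v$, one has $\p_{x^a}e^{-imx\cdot v\,\gamma\cdot v}=-imv_a\gamma\cdot v\,e^{-imx\cdot v\,\gamma\cdot v}$, so differentiating \eqref{freef} under the integral gives $D^\beta\psi$ as an integral of the same form with $f$ replaced by $(-im)^{|\beta|}v^\beta(\gamma\cdot v)^{|\beta|}f(v)\in\Sc(H_+)$. Next, using $(\gamma\cdot v)^2=v^2=1$ on $H_+$ one splits
\[
 e^{-imx\cdot v\,\gamma\cdot v}=\cos(mx\cdot v)-i\gamma\cdot v\sin(mx\cdot v),
\]
which reduces the problem to bounding the scalar oscillatory integrals
\[
 I_\pm(x)=\int e^{\pm imx\cdot v}h_\pm(v)\,d\mu(v),\qquad h_\pm\in\Sc(H_+).
\]

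Set $X_a(x,v)=x_a-(x\cdot v)v_a$ and $Q(x,v)=(x\cdot v)^2-x^2$. Since $X$ is orthogonal to the timelike $v$ it is spacelike with $-X\cdot X=Q$, and \eqref{loreu} gives the uniform lower bound $Q\geq -x^2\geq\tfrac{1-\delta^2}{1+\delta^2}|x|^2$ on $R_\delta$. A direct calculation shows $X^a\delta_a e^{\pm imx\cdot v}=\mp imQ\,e^{\pm imx\cdot v}$, so the first-order operator $\mathcal{L}:=\mp(imQ)^{-1}X^a\delta_a$ satisfies $\mathcal{L}e^{\pm imx\cdot v}=e^{\pm imx\cdot v}$. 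By \eqref{vsto} the formal $d\mu$-transpose of $\delta_a$ is $-\delta_a+3v_a$; iterating $\mathcal{L}^N e^{\pm imx\cdot v}=e^{\pm imx\cdot v}$ and integrating by parts $N$ times produces
\[
 I_\pm(x)=\int e^{\pm imx\cdot v}(\mathcal{L}^T)^N h_\pm(v)\,d\mu(v).
\]

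The main obstacle is the inductive estimate on this integrand. Each application of $\mathcal{L}^T$ adjoins a factor $X^a/Q$, a first-order $\delta$-derivative, and lower-order terms from $\delta_a(X^b/Q)$ and $3v_a$. Using the explicit relations $\delta_a(x\cdot v)=X_a$, $\delta_a Q=2(x\cdot v)X_a$, and $\delta_a X^b=-X_a v^b-(x\cdot v)(\delta_a^{\,b}-v_a v^b)$, together with $|X|\leq C|x|(1+|v|^2)$ and $Q\geq c(\delta)|x|^2$, one checks by induction on $N$ that $(\mathcal{L}^T)^N h_\pm$ is pointwise bounded on $R_\delta$ by $C_N(1+|x|)^{-N}$ times a fixed Schwartz function of $v$ on $H_+$---the accumulated $v$-polynomials being absorbed into the seminorms of $h_\pm$. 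Integration against $d\mu$ then yields the claim for $|x|\geq 1$, while the trivial bound $|\psi(x)|\leq \mathrm{const}$ handles $|x|\leq 1$. The bookkeeping in this induction is routine but is the place that requires care.
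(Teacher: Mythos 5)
Your proposal is correct and follows essentially the same route as the paper: a non-stationary phase argument on $H_+$, integrating by parts with $\delta_a$ via \eqref{vsto} and using \eqref{loreu} to bound $(v\cdot x)^2-x^2\geq\con(\delta)|x|^2$ from below on $R_\delta$; note that since $v^a\delta_a=0$ on $H_+$, your vector field $X^a\delta_a$ coincides with the paper's $x^a\delta_a$, and your $\cos/\sin$ split is equivalent to the paper's use of the projections $P_\pm(v)=\tfrac12(1\pm\gamma\cdot v)$. The only difference is presentational: the paper packages the $N$-fold integration by parts into the closed-form identity with operators $x\cdot\big(\delta+(2k-3)v\big)$ acting solely on $g$ and the factor $[x^2-(v\cdot x)^2]^{-n}$ pulled out, which replaces the inductive bookkeeping you defer to at the end.
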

\begin{proof}
The representation \eqref{freef} is proportional to the sum of
two terms $\int e^{\mp imv\cdot x}f_\pm(v)\,d\mu(v)$ with
$f_\pm=P_\pm(v)f(v)$, $P_\pm(v)=\tfrac{1}{2}(1\pm \ga\cdot v)$.
It is clear that application of $D^\beta$ only modifies
functions $f_\pm$. Now, for any $g\in\Sc(H_+)$ and $x^2<0$, we have the identity
\begin{multline}
 \int e^{\pm imv\cdot x}g(v)\,d\mu(v)\\=
 \Big(\frac{\pm i}{m}\Big)^n
 \int \frac{e^{\pm imv\cdot x}}{[x^2-(v\cdot x)^2]^n}
 \left[\,\prod_{k=1}^n
 x\cdot\Big(\delta+(2k-3)v\Big)\right]g(v)\,d\mu(v)\,,
\end{multline}
where the operators under the product sign are ordered from
right to left with increasing $k$. This is easily shown by
induction with respect to $n$ (integrate the r.h.s.\ by parts
with the use of \eqref{vsto}). But using \eqref{loreu} we have
 $|x^2-(v\cdot x)^2|\geq\con(\delta)\,|x|^2$ for $x\in R_\delta$.
 This leads easily to the thesis.
\end{proof}

\begin{thm}
 Let $\psi$ be given by the formula \eqref{freef} with
 $f\in\Sc(H_+)$, and chose an arbitrary set of the type $R_{\ga,\delta}$
 (here $\delta=0$ is also admitted).
 Then there exists $\chi\in\Sc(R_{\ga,\delta})$ which generates
 $\psi$ by \eqref{freeX} (and, therefore, generates $f$ by \eqref{fchi}).
\end{thm}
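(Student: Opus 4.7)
The plan is to mirror the electromagnetic construction of Theorem \ref{AfromJ}: cut the free solution $\psi$ with a smooth spacetime interpolant $F$ so that the source $\chi=(i\gamma\cdot\p-m)(F\psi)$ is automatically concentrated in a slab fitting inside $R_{\ga,\delta}$. Concretely, fix $\ga'\in(0,\ga)$ and choose $F\in C^\infty(\M)$ with $F(x)=0$ for $x^0\leq-\ga'-\delta|\vec{x}|$ and $F(x)=1$ for $x^0\geq\ga'+\delta|\vec{x}|$ (in the admitted case $\delta=0$, the $|\vec{x}|$-dependence is simply dropped). Since $\psi$ solves the free Dirac equation, this reduces to $\chi=i(\gamma\cdot\p F)\psi$, whose support lies in $R_{\ga',\delta}\subset R_{\ga,\delta}$, so the localization requirement is immediate.

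The first task is to verify $\chi\in\Sc(R_{\ga,\delta})$. Smoothness of $\chi$ is inherited from $F$ and $\psi$; rapid decay is where the regular wave packet proposition just proved enters. For any fixed $\delta''\in(\delta,1)$ (or $\delta''\in(0,1)$ when $\delta=0$), a point of $R_{\ga',\delta}$ with $|\vec{x}|\geq\ga'/(\delta''-\delta)$ satisfies $|x^0|\leq\delta''|\vec{x}|$ and hence lies in $R_{\delta''}$, where $\psi$ and all its derivatives decay faster than any polynomial. On the bounded complement, smoothness of $\chi$ already provides uniform estimates. Together these yield the Schwartz bounds on all of $R_{\ga,\delta}$.

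The second, more delicate task is to recover the integral representation \eqref{freeX}. Let $S^{\mathrm{ret}}(m,x)$ and $S^{\mathrm{adv}}(m,x)$ be the retarded and advanced Dirac propagators, so that $S=S^{\mathrm{ret}}-S^{\mathrm{adv}}$ and each is an inhomogeneous fundamental solution of the Dirac operator. Set
\[
 \psi_\pm(x)=\frac{1}{i}\int S^{\mathrm{ret/adv}}(m,x-y)\,\chi(y)\,d^4y\,.
\]
Both $\psi_+$ and $F\psi$ then satisfy the inhomogeneous Dirac equation with source $\chi$: $\psi_+$ does so by construction of $S^{\mathrm{ret}}$, and $F\psi$ by the definition of $\chi$ together with $(i\gamma\cdot\p-m)\psi=0$. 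Moreover $\psi_+$ vanishes for $x^0$ below $\supp\chi$ by the retarded support property, while $F\psi$ vanishes there by construction of $F$. Uniqueness of the Cauchy problem for the Dirac equation at past infinity forces $\psi_+=F\psi$, and the symmetric argument at future infinity gives $\psi_-=(F-1)\psi$; subtracting yields $\psi=\psi_+-\psi_-=(1/i)\,S*\chi$, which is \eqref{freeX}.

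The one step I expect to require mild care is the passage to $x^0\to\mp\infty$ in the uniqueness argument: because the level sets of $F$ are slanted, I need the strict inequality $\ga'<\ga$ to ensure that, for $x^0$ sufficiently far below any fixed level, $x$ lies uniformly both in the region $F=0$ and outside the past light cone of $\supp\chi$, so that both $F\psi$ and $\psi_+$ vanish there. Given the geometric setup this is routine, and no other step poses a real obstacle.
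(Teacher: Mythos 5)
Your construction is essentially the paper's own proof: the paper takes the same interpolating function $F$ (in its antisymmetric $\pm\tfrac12$ normalization, from the proof of Theorem \ref{AfromJ}), sets $\chi=(\ga\cdot\p+im)(F\psi)$, obtains the support and Schwartz properties exactly as you do from the preceding regular-wave-packet proposition, and recovers $\psi$ by the same retarded-propagator plus Cauchy-uniqueness argument that you spell out with $S^{\mathrm{ret}}$, $S^{\mathrm{adv}}$. The only flaw is a normalization slip: with the paper's $S(m,x)=(i\ga\cdot\p+m)D(m,x)$ and $D_{\mathrm{ret}}=\theta(x^0)D$ one has $(i\ga\cdot\p-m)S^{\mathrm{ret}}=-\delta^4$, so $\tfrac{1}{i}S^{\mathrm{ret}}*\chi$ solves the Dirac equation with source $i\chi$ rather than $\chi$; hence your $\chi=(i\ga\cdot\p-m)(F\psi)=i(\ga\cdot\p F)\psi$ reproduces $i\psi$ in \eqref{freeX}, and you should take instead $\chi=(\ga\cdot\p+im)(F\psi)=(\ga\cdot\p F)\psi$, which is precisely the paper's choice.
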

\begin{proof}
 Let $F$ be the function defined in the proof of Theorem
 \ref{AfromJ}, and set \mbox{$\chi=(\ga\cdot\p+im)(F\psi)$}. This
 function has support in $R_{\ga,\delta}$, and with the use of the last proposition one then easily shows that it is a Schwartz function. Using  the method employed in the proof of Thm.\,\ref{AfromJ}, one finds that $\chi$ generates~$\psi$.
 \end{proof}

\subsection{`Dressed' charged fields}

We now want to add radiation clouds to the Dirac fields. We first
treat the problem heuristically, and write the Dirac field in the
`integrational' notation as
 \mbox{$\Psi(f)=\int\ov{f(v)}\gamma\cdot v\,\Psi(v)\,d\mu(v)$}.
For each four-velocity of the particle $v$ we choose an
electromagnetic cloud profile $V_v(s,l)\in\V$, and form a modified
field
 \mbox{$\Psi(f,V_*)=\int\ov{f(v)}\gamma\cdot v\,W(V_v)\Psi(v)\,d\mu(v)$}.
This, of course, has only a heuristic value, but one can expect
that this field can be constructed in the von Neumann algebra
of a representation (from the class defining the
$C^*$-algebra~$\F$). Let us write, still at this informal
level, the commutation relation of this field with the
electromagnetic field. We find
\begin{equation}\label{heur}
 W(V_1)\Psi(f,V_*)=\Psi(S_{V_1,V_*}f,V_*)W(V_1)\,,
\end{equation}
 where
 $\big(S_{V_1,V_*}f\big)(v)=\exp\big[i\vp_{V_1,V_*}(v)\big]\,f(v)$
 with
\begin{equation}\label{Smod}
 \vp_{V_1,V_*}(v)
 =-\frac{e}{4\pi}\int\frac{v\cdot\Delta V_1(l)}{v\cdot l}\,d^2l
 +\{V_1,V_v\}\,.
\end{equation}

The problem of compensating the Coulomb field by the cloud field
in some region is now the problem of choosing $V_v$ so as to
compensate the first term in \eqref{Smod} by the second term,
for $V_1$ in some class. However, we note that the symplectic form
reduces to zero when restricted to any of the two subspaces of
functions $V(s,l)$ which are even or odd in $s$ respectively. But
$\Delta V_1(l)$ is the characteristic of the odd part of
$V_1(s,l)$. Thus the odd part of $V_v(s,l)$ has no influence on
this expected cancellation, and therefore may be assumed to
vanish. In consequence, $V_v$ has no long-range tail, and the
field $W(V_v)$ is infrared-regular. This brings in an important
simplification: in all representations in our class there is
$\pi(W(V_v))=\id_F\otimes\pi_r(W(V_v))$ and this operator is
independent of $\pi(\Psi(f))=\pi_F(\Psi(f))\otimes\id_r$. Our
informal modified field is now
 $\dsp\Psi_\pi(f,V_*)
 =\int\ov{f(v)}\gamma\cdot v\,\pi_F(\Psi(v))\otimes
 \pi_r(W(V_v))\,d\mu(v)$.

The use of representations for further construction is
unavoidable. We shall need some general additional assumptions
on their properties needed in the construction, as well as some
conditions on the `clouds' profiles $V_*$. We formulate these
assumptions in the present section successively, and test them
in a large class of representations in the next subsection.
\begin{assum}\label{measurable}
The profiles $V_v(s,l)\in\V$ are smooth functions of all their arguments $(v,s,l)$, even in $s$. For each pair of
vectors \mbox{$\vp,\chi\in\Hc_r$} the function
$v\mapsto(\vp,\pi_r(W(V_v))\chi)_r$ is measurable.
\end{assum}
\noindent Smoothness implies, in particular, that for each $V_1$ the function $\vp_{V_1,V_*}(v)$ in \eqref{Smod} is smooth, and the operator $S_{V_1,V_*}$ in \eqref{heur} is well defined in $\Sc(H_+)$.

Motivated by the above discussion we choose an orthonormal
basis $\{e_j\}$ of the Hilbert space $\K$ formed of functions $e_j\in\Sc(H_+)$, and `expand' $\pi_F(\Psi(v))$ in that basis. This
leads us to the definition
\begin{equation}\label{psipi}
 \Psi_\pi(f,V_*)=\sum_{j=1}^\infty\pi_F(\Psi(e_j))\otimes
 W_{\pi_r}(V_*,\ov{f}\,\Gamma e_j)\,,
\end{equation}
where $\Gamma$ is the operator defined by $(\Gamma
f)(v)=\gamma\cdot vf(v)$, and $W_{\pi_r}(V_*,\rho)$ is defined
by
\begin{equation}\label{Wweak}
 W_{\pi_r}(V_*,\rho)
 =\int\pi_r(W(V_v))\rho(v)\,d\mu(v)\,,
\end{equation}
integration in the weak sense: the operators are sandwiched in
$(\vp,.\,\chi)_r$ before integration. We note that
$|(\vp,\pi_r(W(V_v))\chi)_r|\leq\|\vp\|_r\|\chi\|_r$, so it is
sufficient that $\rho$ be integrable. Note also that all operators
$W_{\pi_r}(V_*,\rho)$ commute with each other, as all $V_v$ are
even.

\begin{prop}\label{psipiconv}
 The series defining $\Psi_\pi(f,V_*)$ by \eqref{psipi}
 converges ${}^*$-strongly to a~bounded operator independent of the
 choice of the basis $\{e_j\}$ in $\Sc(H_+)$.
\end{prop}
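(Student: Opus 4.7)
The plan is to exploit two structural facts. First, since every profile $V_v(s,l)$ is even in $s$ (Assumption~\ref{measurable}), all symplectic pairings $\{V_v,V_w\}$ vanish, so the Weyl operators $\pi_r(W(V_v))$ mutually commute together with their adjoints; consequently the $B_j := W_{\pi_r}(V_*,\ov{f}\,\Gamma e_j)$ and their adjoints mutually commute, and in particular $B_jB_k^* = B_k^*B_j$ and $B_jB_j^* = B_j^*B_j$. Second, the fermionic Fock representation provides $\|\pi_F(\Psi(h))\| = \|h\|_\K$ and the anticommutation $\{a_j,a_k^*\}=\delta_{jk}\id_F$ for $a_j:=\pi_F(\Psi(e_j))$.

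Writing $T_N := \sum_{j=1}^N a_j\otimes B_j$, I would first establish the algebraic identity
\[
 T_N T_N^* + T_N^* T_N
 = \sum_{j,k=1}^N (a_j a_k^* + a_k^* a_j)\otimes B_j B_k^*
 = \id_F\otimes\sum_{j=1}^N B_j^*B_j,
\]
which uses the commutativity of the $B$'s to symmetrize the Fock-side indices, followed by CAR. Since both summands on the left are positive, $T_N^*T_N \leq \id_F\otimes\sum_{j=1}^N B_j^*B_j$ and therefore $\|T_N\|^2 \leq \bigl\|\sum_{j=1}^N B_j^*B_j\bigr\|$.

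To bound the right-hand side uniformly in $N$ I would evaluate characters of the commutative von Neumann algebra generated by the Weyl operators $\pi_r(W(V_v))$, which contains all of the $B_j$'s. For any such character $\omega$ we have $\omega(\pi_r(W(V_v))) = \chi(v)$ with $|\chi(v)|=1$, and the weak integral defining $B_j$ gives
\[
 \omega(B_j) = \int\chi(v)\,\ov{f(v)}\,\gamma\cdot v\,e_j(v)\,d\mu(v) = (g_\omega,e_j)_\K,
\]
where $g_\omega(v):=\ov{\chi(v)}f(v)\in\K$ satisfies $\|g_\omega\|_\K=\|f\|_\K$. Parseval in $\K$ then yields $\omega\bigl(\sum_j B_j^*B_j\bigr) = \sum_j|\omega(B_j)|^2 = \|g_\omega\|_\K^2 = \|f\|_\K^2$, so $\bigl\|\sum_{j=1}^\infty B_j^*B_j\bigr\| = \|f\|_\K^2$; the increasing positive partial sums converge strongly, and their tails $\sum_{j>M}B_j^*B_j\to 0$ strongly.

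Strong Cauchy convergence of $T_N$ is then immediate by applying the algebraic identity to the difference:
\[
 \|(T_N-T_M)\Phi\|^2 \leq \bigl\langle\Phi,\bigl(\id_F\otimes\sum_{j=M+1}^N B_j^*B_j\bigr)\Phi\bigr\rangle \to 0
\]
as $M\to\infty$, and the same bound controls $(T_N-T_M)^*\Phi$; hence $T_N$ converges ${}^*$-strongly to a bounded operator $T_\infty$ with $\|T_\infty\|\leq\|f\|_\K$. Basis independence I would read off from the matrix element
\[
 \langle\xi\otimes\varphi,\,T_\infty(\zeta\otimes\psi)\rangle = (g_{\varphi,\psi},h_{\xi,\zeta})_\K,
\]
where $g_{\varphi,\psi}(v):=\ov{\langle\varphi,\pi_r(W(V_v))\psi\rangle}\,f(v)$ and $h_{\xi,\zeta}\in\K$ is the Riesz representative of the conjugate-linear functional $h\mapsto\langle\xi,\pi_F(\Psi(h))\zeta\rangle$, both of which are manifestly basis-free. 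The main obstacle is the algebraic identity in the second paragraph: it is what delivers simultaneously the uniform norm bound and the strong Cauchy condition, and it relies crucially on the full commutativity of the $B_j$'s with their adjoints, which is precisely what the evenness of the profiles $V_v$ secures.
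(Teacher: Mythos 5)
Your overall architecture coincides with the paper's: the same CAR identity (using that the $B_j=W_{\pi_r}(V_*,\ov{f}\,\Gamma e_j)$ form a commuting family because the profiles are even), the same reduction of $^*$-strong Cauchyness to a uniform bound on the partial sums of $\sum_j B_j^*B_j$, and the same matrix-element formula on product vectors for basis independence. The one place where you genuinely depart from the paper is the step that delivers the uniform bound, and that step has a gap. You evaluate \emph{characters} of the commutative von Neumann algebra generated by the $\pi_r(W(V_v))$ and pull them through the weak integral, writing $\omega(B_j)=\int\chi(v)\,\ov{f(v)}\,\gamma\cdot v\,e_j(v)\,d\mu(v)$ with $\chi(v)=\omega(\pi_r(W(V_v)))$. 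But $B_j$ is only a weak (matrix-element-wise) integral, so a bounded functional can be interchanged with it only if it is \emph{normal}; characters of a commutative von Neumann algebra are generically singular, so this interchange is unjustified. (Nor can you upgrade the integral to a Bochner integral: $V\mapsto\pi_r(W(V))$ is not norm continuous, Weyl unitaries with distinct arguments being uniformly separated in norm.) A second, related problem is that for a singular character the function $v\mapsto\omega(\pi_r(W(V_v)))$ need not even be measurable — Assumption 1 only guarantees measurability of the \emph{normal} matrix elements $v\mapsto(\vp,\pi_r(W(V_v))\chi)_r$ — so your $g_\omega$ is not known to lie in $\K$, and Bessel/Parseval cannot be invoked.

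The repair is to bound (indeed compute) $\sum_j(\chi,B_j^*B_j\chi)_r$ directly with vector states, which are normal: expanding $B_j\chi$ in an orthonormal basis $\{\vp_k\}$ of $\Hc_r$, using $(\vp_k,B_j\chi)_r=\big(\ov{(\vp_k,\pi_r(W(V_*))\chi)_r}\,f,\,e_j\big)_\K$, and summing first over $j$ (Parseval in $\K$) and then over $k$ (Parseval in $\Hc_r$, with unitarity of $\pi_r(W(V_v))$ and dominated/monotone convergence), one gets the exact identity $\sum_j B_j^*B_j=\|f\|^2\,\id_r$ in the $\sigma$-strong sense — this is precisely the paper's computation, and it yields your uniform bound (with equality rather than the inequality your character argument aimed at). With that substitution, the rest of your argument — the Cauchy estimate for $T_N-T_M$ and $(T_N-T_M)^*$ from the positive-operator identity, and basis independence read off from the product-vector matrix elements — goes through and matches the paper's proof.
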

\begin{proof}
 If we denote by $\Psi^{(n)}_\pi(f,V_*)$ the series
 truncated to the first $n$ terms, set
 $C_{mn}=\Psi^{(n)}_\pi(f,V_*)-\Psi^{(m)}_\pi(f,V_*)$,
and use the anticommutation relations for $\Psi(e_j)$, we find
 \begin{equation}
 C_{mn}C_{mn}^*+C_{mn}^*C_{mn}=\id_F\,\otimes\!\sum_{j=m+1}^nw_j^*w^{}_j\,,
\end{equation}
where $w_j=W_{\pi_r}(V_*,\ov{f}\,\Gamma e_j)$. Now, using
\eqref{Wweak} it is easy to see that
\begin{multline}\label{Wj}
 \big(\vp,W_{\pi_r}(V_*,\ov{f}\,\Gamma e_j)\chi\big)_r
 =\Big(f,\big(\vp,\pi_r(W(V_*))\chi\big)_re_j\Big)\\
 =\Big(\ov{\big(\vp,\pi_r(W(V_*))\chi\big)_r}\,f,e_j\Big)\,,
\end{multline}
so if we choose any orthonormal basis $\vp_k$ of $\Hc_r$, we find
\begin{multline}
 \sum_{j=1}^\infty(\chi,w_j^*w_j\chi)_r=\sum_{j,k=1}^\infty|(\vp_k,w_j\chi)_r|^2\\
 =\sum_{k=1}^\infty\int|(\vp_k,\pi_r(W(V_v))\chi)_r|^2
 \ov{f(v)}\gamma\cdot v f(v)\,d\mu(v)=\|f\|^2\|\chi\|_r^2\,,
\end{multline}
the last step by the Lebesgue theorem. As $\sum_{j=1}^nw_j^*w_j$
is an increasing sequence of operators, this calculation shows
that $\sum_{j=1}^\infty w_j^*w_j=\|f\|^2\id_r$ in the
$\sigma$-strong sense. This is sufficient for the ${}^*$-strong convergence of the series \eqref{psipi} and the bound of the norm of the limit. The independence of the basis follows from the action of the limit operator on product vectors. It is easy to see with the use of \eqref{Wj} that
\begin{equation}\label{psipi2}
 (\xi_1\otimes\chi_1,\Psi_\pi(f,V_*)\,\xi_2\otimes\chi_2)
 =\Big(\xi_1,\pi_F\Big(\Psi\big(\,\ov{(\chi_1,\pi_r(W(V_*))\chi_2)_r}f\,\big)\Big)\xi_2\Big)_F\,.
\end{equation}
\end{proof}

The (anti-) commutation relations of the `dressed' Dirac fields
are:
\begin{gather}
 [\Psi_\pi(f,V_*),\Psi_\pi(f',V'_*)]_+=0\,,\label{psipsi}\\[1ex]
 [\Psi_\pi(f,V_*),\Psi_\pi(f',V'_*)^*]_+
 =\id_F\otimes W_{\pi_r}(V_*-V'_*,\ov{f'}\,\Gamma f)\,,\label{psipsistar}\\[1ex]
 \pi(W(V_1))\Psi_\pi(f,V_*)=\Psi_\pi(S_{V_1,V_*}f,V_*)\pi(W(V_1))\,,\label{Wpsi}
\end{gather}
where $S_{V_1,V_*}$ is given, as in the heuristic introduction,
by \eqref{Smod}. These relations are straightforwardly
calculated with the use of the definition \eqref{psipi}. For
the second and third identity use the technique of the above
proof and the independence of basis $\{e_j\}$ respectively.
Setting $V'_*=V_*$ we find that dressed fields with a fixed
profile $V_*$ satisfy the usual {\rm CAR} relations among
themselves. It follows thus by a~standard argument (see e.g.\
\cite{bra}) that $\|\Psi_\pi(f,V_*)\|=\|f\|$.

To investigate the long-range behaviour of the dressed fields, we
scale their radiation clouds. The profile $V_v$ in the element
$W(V_v)$ may be assumed to result from a conserved current $J_v$
supported in $R_{\gamma,\delta}$, having vanishing asymptote, even
with respect to the reflection: $J_v(-x)=J_v(x)$. As then, in
loose terms, $W(V_v)=\exp[-iA(J_v)]$ and $A(J_v)=\int
A(x)J_v(x)d^4x$, scaling the electromagnetic field observable to
spacelike infinity means replacing $J_v$ by
$J^R_v(x)=R^{-3}J_v(x/R)$ and taking the limit $R\to\infty$ (cf.\
\cite{bu86}). This scaling induces a simple scaling law for $V_v$.
Thus we set
\begin{equation}\label{VvR}
 V^R_v(s,l)=V_v(s/R,l)\,\\[1ex]
\end{equation}

\begin{assum}\label{wlimit}
There exist weak limits
\begin{equation}\label{ass2}
 \wlim_{R\to\infty}\pi_r(W(V^R_v))=\N_{\pi_r}(V_v)\,W_{\pi_r}^\infty(V_v)\,,
\end{equation}
such that $W_{\pi_r}^\infty(V_v)$ are unitary operators in
 $\Hc_r$, and the real, positive functions
 \mbox{$v\mapsto \N_{\pi_r}(V_v)>0$} are smooth and such that
 $1/\N_{\pi_r}(V_v)$ are multipliers\linebreak in~$\Sc(H_+)$.
\end{assum}
Note that it follows from Assumptions \ref{measurable} and
\ref{wlimit} that $\N_{\pi_r}(V_v)\leq1$ and functions
 $v\mapsto (\vp,W_{\pi_r}^\infty(V_v)\chi)$ are measurable for
 all $\vp,\chi\in\Hc_r$. Also, the operators
 $W_{\pi_r}^\infty(V_v)$ commute with each other.

 Mimicking the definitions \eqref{Wweak} and \eqref{psipi} we
 now define
\begin{gather}
 W_{\pi_r}^\infty(V_*,\rho)
 =\int W_{\pi_r}^\infty(V_v)\rho(v)\,d\mu(v)\,,\label{Wweakinf}\\
 \Psi_\pi^\infty(f,V_*)=\sum_{j=1}^\infty\pi_F(\Psi(e_j))\otimes
 W_{\pi_r}^\infty(V_*,\ov{f}\,\Gamma e_j)\,,\label{psipiinf}
\end{gather}
and note that also the analogue of \eqref{psipi2} holds:
\begin{equation}
 (\xi_1\otimes\chi_1,\Psi_\pi^\infty(f,V_*)\,\xi_2\otimes\chi_2)
 =\Big(\xi_1,\pi_F\Big(\Psi\big(\,\ov{(\chi_1,W_{\pi_r}^\infty(V_*) \chi_2)_r}f\,\big)\Big)\xi_2\Big)_F\,.
\end{equation}
The correctness and independence of basis of the definition
\eqref{psipiinf} is shown as in the proof of Proposition
\ref{psipiconv}. It is now easy to show that (the order of limits in the second relation is irrelevant)
\begin{gather}
 \wlim_{R\to\infty}W_{\pi_r}(V^R_*,\rho/\N_{\pi_r}(V_*))
 =W_{\pi_r}^\infty(V_*,\rho)\,,\label{Winf}\\[1ex]
 \begin{split}
 \wlim_{R\to\infty}\lim_{R'\to\infty}
 W_{\pi_r}\big(V^R_*-V'_*{}^{R'}&,\rho/[\N_{\pi_r}(V_*)\N_{\pi_r}(V'_*)]\big)\\[-1ex]
 &=\int W_{\pi_r}^\infty(V_v)W_{\pi_r}^\infty(V'_v)^*\,\rho(v)\,d\mu(v)\,,
 \end{split}\label{WinfWinf}
 \\[1ex]
 \wlim_{R\to\infty}\Psi_\pi(f/\N_{\pi_r}(V_*),V^R_*)
 =\Psi_\pi^\infty(f,V_*)\,;\label{psipiinf2}
\end{gather}
for the last relation use \eqref{psipi2} and the uniform
boundedness of the norms of the operators under the limit. To find
the (anti-) commutation relations of the dressed fields, we use
their representation \eqref{psipiinf2} and the relations
\eqref{psipsi} -- \eqref{Wpsi}, with the use of \eqref{WinfWinf}
on the r.h.s.\ of \eqref{psipsistar}. Setting now $V'_*=V_*$ we
find
\begin{gather}
 [\Psi_\pi^\infty(f,V_*),\Psi_\pi^\infty(f',V_*)]_+=0\,,\\[1ex]
 [\Psi_\pi^\infty(f,V_*),\Psi_\pi^\infty(f',V_*)^*]_+
 =(f,f')_\K \id\,,\\[1ex]
 \pi(W(V_1))\Psi_\pi^\infty(f,V_*)
 =\Psi_\pi^\infty(S^\infty_{V_1,V_*}f,V_*)\pi(W(V_1))\,,\label{WPsiinf}
\end{gather}
where
\begin{gather}
 \big(S^\infty_{V_1,V_*}f\big)(v)
 =\exp\big[i\vp^\infty_{V_1,V_*}(v)\big]\,f(v)\,,\label{SVinfty}\\[1.5ex]
 \vp^\infty_{V_1,V_*}(v)=-\frac{e}{4\pi}
 \int\frac{v\cdot\Delta V_1(l)}{v\cdot l}\,d^2l
 +\frac{1}{2\pi}\int V_v(0,l)\cdot\Delta V_1(l)\,d^2l\,.
\end{gather}
To show \eqref{WPsiinf}, one notes first that
$\dsp\lim_{R\to\infty}\vp_{V_1,V^R_*}(v)=\vp^\infty_{V_1,V_*}(v)$
and then observes that while taking the weak limit of
$\Psi_\pi(S_{V_1,V^R_*}f/\N_{\pi_r}(V_*),V^R_*)$ one can replace
$S_{V_1,V^R_*}$ by $S^\infty_{V_1,V_*}$ as the difference vanishes
in norm.

We note that the dependence of $\vp^\infty_{V_1,V_*}(v)$ on $V_1$
is only through its infrared tail $\Delta V_1$. In spacetime terms
it means that the dependence on the test current $J_1$ giving rise
to $V_1$ is only through its asymptote $J^\as_1$, which may be
assumed to be of the form $J^\as_1(z)=z\rho_1(z)$, in accordance
with Proposition~\ref{Jzg}. Thus using \eqref{DeltaV} we can write
\begin{gather}
 \vp^\infty_{V_1,V_*}(v)=\int\rho_1(z) F_v(z)\,d\nu(z)\,,\\
 F_v(z)=\frac{1}{2\pi}\int\frac{1}{z\cdot l}\,z\cdot\left[V_v(0,l)
 -\frac{e\,v}{2\,v\cdot l}\right]\,d^2l\,,\label{Fv}
\end{gather}
the second integral in the principal value sense.

The negative result mentioned at the beginning of Section
\ref{locdir} is now the following.
\begin{thm}
 There is no choice of profiles $V_v(0,l)$ such that
 $S^\infty_{V_1,V_*}f=f$ would hold for any test function $f$
 and for all $J^\as_1(z)=z\rho_1(z)$ supported in any given fixed
symmetrical spacelike cone.
\end{thm}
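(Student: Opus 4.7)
My plan is to reduce the assertion to a local differential identity on the hyperboloid $H_-$, and then exhibit an explicit obstruction coming from the Coulomb tail of the dressed charge.

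First, since the test function $f$ is arbitrary, the condition $S^\infty_{V_1,V_*}f=f$ forces $\exp[i\vp^\infty_{V_1,V_*}(v)]=1$ for every $v\in H_+$. By the smoothness of $\vp^\infty_{V_1,V_*}(v)$ in $v$ (Assumption~\ref{measurable}) together with the connectedness of $H_+$, the vanishing at $V_1=0$, and the linearity of $\vp^\infty_{V_1,V_*}$ in $\rho_1$, this is equivalent to $\vp^\infty_{V_1,V_*}\equiv 0$. Invoking the explicit formula $\vp^\infty_{V_1,V_*}(v)=\int\rho_1(z)F_v(z)\,d\nu(z)$ and the richness of admissible $\rho_1$ (smooth, even, homogeneous of degree $-4$, supported in $\mathcal{C}\cap H_-$, with $\int\rho_1\,d\nu=0$), I would conclude that for every $v\in H_+$ the function $z\mapsto F_v(z)$ is \emph{constant in $z$} on the open set $\mathcal{C}\cap H_-$. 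In particular, the intrinsic hyperbolic Laplacian $\delta^a\delta_a$ must annihilate $F_v$ there.

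The next step is to compute $\delta^a\delta_a$ on each summand of $F_v(z)=\Theta_v(z)+G(v\cdot z)$, where $\Theta_v(z)=\frac{1}{2\pi}\int\frac{z\cdot V_v(0,l)}{z\cdot l}\,d^2l$ and $G(t)=-e\,t/\sqrt{t^2+1}\,\log[\sqrt{t^2+1}+t]$. For the radiation piece I would observe that $\Theta_v(z)=-z\cdot\widetilde A(z)$ with $\widetilde A_a(x):=-\frac{1}{2\pi}\int V_{v,a}(0,l)(x\cdot l)^{-1}\,d^2l$. The identities $l^2=0$ and $l\cdot V_v(0,l)=0$ give $\Box_{\M}\widetilde A=0$ and $\partial\cdot\widetilde A=0$, whence $F(x):=-x\cdot\widetilde A(x)$ satisfies $\Box_{\M}F=-2\,\partial\cdot\widetilde A-x\cdot\Box_{\M}\widetilde A=0$ on $\M$. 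Since $\widetilde A$ is homogeneous of degree $-1$ in $x$, $F$ is homogeneous of degree $0$; writing $\Box_{\M}$ in the hyperboloid coordinates $x=\lambda z$ via \eqref{zd} gives $\Box_{\M}=-\partial_\lambda^2-3\lambda^{-1}\partial_\lambda+\lambda^{-2}\delta^a\delta_a$, and the $\lambda$-derivatives of $F$ vanish, so $\delta^a\delta_a\Theta_v(z)=0$ identically on $H_-$.

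For the Coulomb piece, $G$ depends on $z$ only through $t=v\cdot z$; using $\delta_a(v\cdot z)=v_a+(v\cdot z)z_a$, $v^2=1$, $z^2=-1$, and $\delta^a z_a=3$, a chain-rule computation reduces $\delta^a\delta_a\,G(v\cdot z)$ to the one-variable combination $(1+t^2)G''(t)+3t\,G'(t)$, and an explicit differentiation collapses this expression to the constant $-2e$. Combining the two computations,
\[
 \delta^a\delta_a F_v(z)\equiv -2e\neq 0\text{ on }H_-,
\]
contradicting the constancy of $F_v$ on the open subset $\mathcal{C}\cap H_-$. The main obstacle is the identity $(1+t^2)G''(t)+3t\,G'(t)=-2e$: the individual derivatives of $G$ involve $\log[\sqrt{t^2+1}+t]$ and rational functions of $t$, and only after the particular contractions dictated by the hyperbolic Laplacian do the logarithms cancel, leaving precisely the bare charge $-2e$. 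This is the structural reason the dressing fails: the Coulomb tail of the charged field produces a constant, nonzero source for $\delta^a\delta_a F_v$, which cannot be neutralised by any choice of the radiation profile $V_v(0,l)$.
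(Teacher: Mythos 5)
Your argument is correct and is essentially the paper's own proof: both reduce the hypothesis (via scaling of $\rho_1$ and evenness) to the constancy of $F_v$ on the symmetric patch, split $F_v$ into a radiation part annihilated by the wave operator and the Coulomb part, and exhibit the obstruction — your intrinsic computation $\delta^a\delta_a F_v=-2e$ on $H_-$ (your identity $(1+t^2)G''+3tG'=-2e$ does check out) is exactly the restriction of the paper's $\Box F_v(x)=2e/x^2$ for the degree-0 homogeneous extension, since $\Box=\lambda^{-2}\delta^a\delta_a$ on such functions. The only real difference is the treatment of the radiation term, where you differentiate under the principal-value integral using $l^2=0$ and $l\cdot V_v(0,l)=0$, whereas the paper first recasts $F^{(1)}_v$ via Eq.\,(A.4) of \cite{her08} with an integrable logarithmic kernel, which is the cleaner justification of that interchange; the conclusion $\Box F^{(1)}_v=0$ is the same.
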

\begin{proof}
 The asymptote $\rho_1(z)$ is subject to two conditions: it must
be an even function and satisfy $\int\rho_1(z)d\nu(z)=0$ (cf.\
\eqref{contzero}). The only way to achieve
\mbox{$\exp[i\vp^\infty_{V_1,V_*}(v)]=1$} for some $v$ and all admissible $\rho_1$ supported in a given symmetrical spacelike cone would be that $F_v(z)=\con.$ on the patch of hyperboloid defining this cone (note that $F_v(z)$ is also even). This, however, is impossible for the following reason. It is
easily seen that $F_v(z)$ extends naturally to an even,
homogeneous function $F_v(x)$ of degree $0$ for all $x^2<0$ (by
simply replacing $z$ by $x$ in \eqref{Fv}). Now $F_v(z)=\con.$ in
a patch iff $F_v(x)=\con.$ in the corresponding cone. This,
however, is impossible, as we shall see that $\Box F_v(x)=2e/x^2$.
To show this, we first use the result of Appendix A of
\cite{her08}. Each possible profile $V_v(0,l)$ must be orthogonal
to $l$ and thus satisfies the conditions on $V(l)$ of this
Appendix. Using Eq.\,(A.4) one finds
\begin{multline}
 F^{(1)}_v(x)\equiv
 \frac{1}{2\pi}\int\frac{x\cdot V_v(0,l)}{x\cdot l}\,d^2l\\
 =-\frac{1}{2\pi}\int \p\cdot V_v(0,l)\,\log\frac{|x\cdot l|}{v\cdot l}\,d^2l
 +\frac{1}{2\pi}\int\frac{v\cdot V_v(0,l)}{v\cdot l}\,d^2l\,.
\end{multline}
This implies $\Box F^{(1)}_v(x)=0$. On the other hand one
explicitly calculates
\begin{equation}
 F^{(2)}_v(x)=-\frac{e\,v\cdot x}{4\pi}
 \int\frac{d^2l}{x\cdot l\,v\cdot l}
 =-e\, \frac{v\cdot x}{\sqrt{(v\cdot x)^2-x^2}}
 \artanh\frac{v\cdot x}{\sqrt{(v\cdot x)^2-x^2}}
\end{equation}
and $\Box F^{(2)}_v(x)=2e/x^2$.
\end{proof}
This result shows that it is impossible to choose $V_v(s,l)$ in
such a way that the exponential factor in \eqref{SVinfty} vanishes
for all test functions and test currents supported in symmetrical
spacelike cones. However, one can find $V_v(s,l)$ which makes this
exponential factor independent of $v$. Let
\begin{equation}
 V_v(s,l)=
 \left(\frac{e}{2}\right)\left(\frac{v}{v\cdot l}
 -\frac{t}{t\cdot l}\right)\eta\left(\frac{s}{t\cdot l}\right)\,,\label{profiles}
\end{equation}
where $t$ is a timelike unit vector and $\eta(s)$ is a smooth
function satisfying: \mbox{$0\leq \eta(s)\leq 1$}, $\eta(0)=1$,
$\eta(s)=\eta(-s)$ and there exist $s_0>0$ such that
$\eta(s)=0$ for $s>s_0$. For this profile, if it satisfies
Assumptions \ref{measurable} and \ref{wlimit} (beside smoothness, which is obvious), it follows that:
\begin{equation}
\vp^\infty_{V_1,V_*}=-\frac{e}{4\pi}
 \int\frac{t\cdot\Delta V_1(l)}{t\cdot l}\,d^2l\,,\qquad
 S^\infty_{V_1,V_*}=\exp\big[i\vp^\infty_{V_1,V_*}\big]\id\,.
\end{equation}
The commutation relation \eqref{WPsiinf} and its adjoint take
now the following simple form:
\begin{align}
 \pi(W(V_1))\Psi_\pi^\infty(f,V_*)
 &=e^{i\vp^\infty_{V_1,V_*}} \Psi_\pi^\infty(f,V_*)\pi(W(V_1))\,,\label{psipiV}\\
 \pi(W(V_1))\Psi_\pi^\infty(f,V_*)^*
 &=e^{-i\vp^\infty_{V_1,V_*}} \Psi_\pi^\infty(f,V_*)^*\pi(W(V_1))\,.\label{psipiVstar}
\end{align}
It is now possible to restrict the scope of test functions $f$
to those resulting from compactly supported four-spinor test
fields $\chi$ in \eqref{fchi}. Then the observables
$\Psi_\pi^\infty(f,V_*)^*\Psi_\pi^\infty(f',V_*)$ form a local
net commuting with the electromagnetic field, with localization
determined by the union of the supports of $\chi$ and $\chi'$.
These are the asymptotic incarnations, in our model, of the
quantities discussed at the beginning of Section \ref{locdir}.

\begin{assum}\label{VV'}
For any two profiles $V_v$, $V_v'$ of the form \eqref{profiles}
(with possibly different vectors $t$ and functions $\eta$) the
unitary operator
$W^\infty_{\pi_r}(V_v)W^\infty_{\pi_r}(V'_v)^*$ formed by the
operators defined by Assumption \ref{wlimit} is independent of
$v$.
\end{assum}

With this assumption it is now easy to see that the observables
defined above do not depend on a particular choice of the
profile $V_v$ in the class \eqref{profiles}.

\subsection{Special choice of representation}\label{repr}

In this subsection we show that Assumptions \ref{measurable},
\ref{wlimit} and \ref{VV'} are fulfilled for profiles
\eqref{profiles} in a class of representations $\pi_r$ in
\eqref{rep} constructed in earlier papers.

Consider the vector space of equivalence classes of real,
smooth vector functions $f_a(l)$ on the cone, homogeneous of
degree $-1$, with $l\cdot f(l)=0$. The equivalence relation is
introduced by: $f_{1}\sim f_2 \Leftrightarrow
f_{1a}(l)=f_{2a}(l)+\beta(l)l_a$. The completion of this space
with respect to the scalar product
\[
 (f_1,f_2)_0=-\int f_1(l)\cdot f_2(l)\,d^2l
\]
is a~real Hilbert space denoted $\Hc_0$. The closure of the
subspace of (equivalence classes of) smooth functions satisfying
$L\wedge f=0$ forms a Hilbert space denoted by $\Hc_{IR}$. Let
$H(s,l)$ be a homogeneous of degree 0, smooth function, such that
$\lim\limits_{s\rightarrow \pm\infty}H(s,l)=\pm1$ and
$\dot{H}(s,l)$ satisfies the falloff condition analogous to
\eqref{falloff1}. We denote $h(s,l)=\pi \dot{H}(s,l)$ and fix
notation for Fourier transform with respect to $s$ by
\begin{equation}
 \ti{h}(\w,l)=\frac{1}{2\pi}\int e^{i\w s}h(s,l)\,ds\,,\label{fourier}
\end{equation}
so $\ti{h}(0,l)=1$. Following the notation of \cite{her98} and
\cite{her08} we set
\[
 p(\dV)=\ti{\dV}(0,l)=\frac{1}{2\pi}\Delta V\,,
\]
the long range characteristic of $V(s,l)$, and denote by
$r_h(\dV)$ the orthogonal projection of
 $\frac{1}{2}\int \dV H(s,l)ds$ onto~$\Hc_{IR}$.
We split function $V(s,l)$ into the IR-regular and IR-singular
part by setting:
\begin{multline}
 \ti{\dV}(\w,l)
 =\Big[\ti{\dV}(\w,l)-\ti{\dV}(0,l)\ti{h}(\w,l)\Big]
 +\ti{\dV}(0,l)\ti{h}(\w,l)\\
 =\ti{\dV}_\reg(\w,l)+\ti{\dV}(0,l)\ti{h}(\w,l)\,.\label{divideV}
\end{multline}
In particular, $p(\dV)=0$ means that $\dV$ is IR-regular, i.e.
the field has no `long range tail'.

Further, we consider the Weyl algebra generated by the
elements\linebreak \mbox{$w(g\oplus k)$}, where $g\oplus k$
belongs to the vector space $\C^{\infty}_{IR}\oplus\C^\infty_{IR}$
($\C^\infty_{IR}:=\C^\infty\cap\Hc_{IR}$, differentiability is
understood in the sense of $L_{ab}$) with the symplectic
structure:
\begin{equation}
 \{g_1\oplus k_1,g_2\oplus k_2\}_{IR}
 := (g_1,k_2)_{IR}-(k_1,g_2)_{IR}\,.
\end{equation}
Algebraic relations satisfied by elements $w(g\oplus k)$ are
\begin{align}
 w(g_1\oplus k_1)w(g_2\oplus k_2)
 &=e^{-\frac{i}{2}\{g_1\oplus k_1,g_2\oplus k_2\}_{IR}}
 w\big((g_1+g_2)\oplus(k_1+k_2)\big)\label{w1}\,,\\
 w(g\oplus k)^*&=w(-(g\oplus k))\,.\label{w2}
\end{align}

Let $\pi_{\mathrm{sing}}$ be a cyclic representation of this
algebra derived by GNS construction from the state
\begin{gather}
 \w_\mathrm{sing}\big(\w(g\oplus k)\big)=
 \exp\Big(-\tfrac{1}{4}(g,C^{-1}g)_{IR}-\tfrac{1}{4}(k,Ck)_{IR}\Big)\,,\label{wsing}
\end{gather}
with the corresponding Hilbert space $\Hc_\sing$ and the cyclic vector
$\W_\sing$.\linebreak Here $C$ is any positive, trace-class
operator such that $\C^\infty_{IR}\subset C^{1/2}\Hc_{IR}$,
$\ov{C^{-1/2}\C^\infty_{IR}}^{\Hc_{IR}}=\Hc_{IR}$. Denote by
$\pi_0$ the standard positive energy Fock representation of
infrared-regular fields, generated by GNS construction from the
vacuum state
\begin{gather}
 \w_0(W(V_\reg))
 =\exp\left(-\tfrac{1}{2}F(\dV_\reg,\dV_\reg)\right)\,,\\[1ex]
 F(\dV_1,\dV_2)=\int_{\w\geq0}\left(-\ov{\ti{\dV}_1(\w,l)}\cdot
 \ti{\dV}_2(\w,l)\right)\frac{d\w}{\w}d^2l\,, \label{FVV}
\end{gather}
with the corresponding Hilbert space $\Hc_\reg$ and cyclic
vector $\W_0$. Then the formula
\begin{equation}
 \pi_r(W(V))=\pi_\sing\big(w(p(\dV)\oplus
 r_h(\dV))\big)\otimes\pi_0(W(V_\reg)),\label{pir}
\end{equation}
determines  a regular, translationally covariant positive energy
representation of $\B^-$ on
$\Hc_r=\Hc_{\mathrm{sing}}\otimes\Hc_\reg$ \cite{her98}. Now one
has to prove that assumptions \ref{measurable}, \ref{wlimit} and
\ref{VV'} are fulfilled for this choice of $\pi_r$.

It was shown in \cite{her98} that the representation $\pi_r$
does not depend on the concrete shape of $H(s,l)$. Therefore,
for the convenience of the proof of proposition \ref{exrep}, we
shall assume, from now on, a special choice of this function.
We put $H(s,l)=H_t\left(s/t\cdot l\right)$ for a
timelike unit vector $t$, and a smooth function $H_t$ such that
for some $u_0>0$ there is $H_t(u)=1$ for $u>u_0$, and
$H_t(u)=-1$ for $u<-u_0$.
\begin{prop}\label{exrep}
For the representation $\pi_r$ defined by \eqref{pir} and the
profiles $V_v$ given by \eqref{profiles}, Assumptions
\ref{measurable}, \ref{wlimit} and \ref{VV'} are satisfied.
\end{prop}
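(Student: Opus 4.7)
My plan has three main steps. First, I exploit the structure of the profile: the $V_v$ in \eqref{profiles} is manifestly smooth in all arguments $(v,s,l)$, even in $s$, and satisfies $V_v(\pm\infty,l)=0$ because $\eta$ is compactly supported. Hence $\Delta V_v=0$, $p(\dot V_v)=0$, and $V_v$ is already IR-regular, $V_v=V_{v,\reg}$; the scaling $V^R_v(s,l):=V_v(s/R,l)$ preserves all these properties. Thus
\[\pi_r(W(V^R_v))=\pi_\sing\bigl(w(0\oplus r_h(\dot V^R_v))\bigr)\otimes\pi_0(W(V^R_v)),\]
and I can analyse the two tensor factors separately. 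Assumption~\ref{measurable} will follow because the matrix elements of $\pi_\sing$ and $\pi_0$ in a total set of vectors built from Weyl operators on $\Omega_\sing\otimes\Omega_0$ are explicit Gaussian-type expressions, continuous in $v$ by smoothness of $V_v$ in $v$.

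Second, for Assumption~\ref{wlimit}, the change of variable $u=s/R$ in the defining integral of $r_h$ gives $\tfrac12\!\int\dot V_v(u,l)H_t(Ru/(t\cdot l))\,du\to\tfrac12\!\int\dot V_v(u,l)\sgn(u)\,du=-V_v(0,l)$ (using $\dot V_v$ odd in $u$, $\eta(0)=1$, $\eta(\pm\infty)=0$). Since both $v/(v\cdot l)$ and $t/(t\cdot l)$ directly satisfy $L\wedge(\cdot)=0$, the limit $k_v:=-V_v(0,l)$ lies in $\C^\infty_{IR}$, dominated convergence upgrades the pointwise limit to strong convergence in $\Hc_{IR}$, and continuity of $\pi_\sing$ yields the strong (hence weak) limit $\pi_\sing(w(0\oplus r_h(\dot V^R_v)))\to\pi_\sing(w(0\oplus k_v))$, a \emph{unitary} operator. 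On the Fock side $F(\dot V^R_v,\dot V^R_v)=F(\dot V_v,\dot V_v)$ is scale-invariant (substitute $\omega'=R\omega$ in \eqref{FVV}), while $F(\dot V^R_v,\dot V_1)$ converges (same substitution, using $\tilde{\dot V_v}(0,l)=0$ for the small-$\omega'$ region and smoothness of $\tilde{\dot V_1}$ for the rest) to a limit depending only on $V_v(0,\cdot)$ and $\Delta V_1$. This produces convergence of the matrix elements of $\pi_0(W(V^R_v))$ in coherent-state vectors $\pi_0(W(V_i))\Omega_0$, and, in close analogy with the elementary coherent-state identity $W(f_R)\to e^{-(c^2-\|f_0\|^2)/2}W(f_0)$ for $f_R\to f_0$ weakly with $\|f_R\|\equiv c$, the weak limit takes the polar form $\mathcal{N}_0(V_v)W^\infty_0(V_v)$ where $W^\infty_0(V_v)$ is the unitary implementing displacement by the boundary value $V_v(0,\cdot)$ and $\mathcal{N}_0(V_v)>0$ captures the norm loss. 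Combining factors yields Assumption~\ref{wlimit} with $W^\infty_{\pi_r}(V_v)=\pi_\sing(w(0\oplus k_v))\otimes W^\infty_0(V_v)$ unitary and $\mathcal{N}_{\pi_r}(V_v)=\mathcal{N}_0(V_v)$; the $v$-dependence of $\mathcal{N}_{\pi_r}$ enters only through $v/(v\cdot l)$ inside smooth $d^2l$-integrals with $\eta$-type cutoffs, producing polynomial bounds in $v^0$ and hence the multiplier property of $1/\mathcal{N}_{\pi_r}$ in $\Sc(H_+)$.

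Third, for Assumption~\ref{VV'}, the only $v$-dependent content of $V_v(0,l)$, and thus of both $k_v$ and the Fock-side displacement parameter, is the common term $(e/2)\,v/(v\cdot l)$; it cancels in the differences $k_v-k_{v'}$ and $V_v(0,\cdot)-V'_v(0,\cdot)$. Because the symplectic form $\{0\oplus k,0\oplus k'\}_{IR}$ vanishes identically, the singular piece reduces to $\pi_\sing(w(0\oplus(k_v-k_{v'})))$, manifestly $v$-independent, and the Fock-side product is a displacement by $V_v(0,\cdot)-V'_v(0,\cdot)$ whose symplectic phase is also $v$-independent. Hence $W^\infty_{\pi_r}(V_v)W^\infty_{\pi_r}(V'_v)^*$ does not depend on $v$. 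I expect the main obstacle to be the rigorous extraction of the polar-decomposition form $\mathcal{N}_0(V_v)W^\infty_0(V_v)$ on the Fock side: the ``limit direction'' $V_v(0,\cdot)$ lies outside the test-function class $\V$, so $W^\infty_0(V_v)$ must be constructed as an element of the von Neumann closure of $\pi_0(\B^-)$ implementing displacement by a singular argument, and one has to verify both its unitarity and the precise polynomial bounds on $1/\mathcal{N}_0$ together with its $\delta$-derivatives required by the multiplier property.
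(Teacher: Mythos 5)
Your treatment of the singular factor is essentially the paper's: $r_h(\dot V^R_v)\to -V_v(0,\cdot)$ in $\Hc_{IR}$, so that factor converges to the genuine Weyl unitary $\pi_\sing\big(w(0\oplus -V_v(0,\cdot))\big)$, and Assumption \ref{VV'} then follows because the $v$-dependent piece $\tfrac{e}{2}\,v/(v\cdot l)$ cancels in the difference of two profiles and the relevant symplectic phases vanish. The genuine gap is on the Fock side. You claim that $\wlim_{R\to\infty}\pi_0(W(V^R_v))$ has the polar form $\N_0(V_v)\,W^\infty_0(V_v)$ with $W^\infty_0(V_v)$ a unitary ``implementing displacement by the boundary value $V_v(0,\cdot)$'', and you flag its construction inside the von Neumann closure of $\pi_0(\B^-)$ as the main remaining obstacle. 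That obstacle cannot be overcome, because no such operator exists: a displacement by the $s$-independent, infrared-singular direction $V_v(0,\cdot)$ is not unitarily implementable in the IR-regular Fock representation $\pi_0$ -- this non-implementability is exactly why the representation \eqref{pir} was built as a tensor product with $\pi_\sing$ in the first place. The correct statement, and the key simplification the paper exploits, is that the Fock factor converges weakly to a \emph{scalar}: in your own coherent-state analogy the limit direction is $f_0=0$, since for every IR-regular test function $V_k$ one has, after the substitution $\omega'=R\omega$ and dominated convergence (using $\ti{\dV}_k(0,l)=0$), $F(\dV^R_v,\dV_k)\to 0$, while $F(\dV^R_v,\dV^R_v)=F(\dV_v,\dV_v)$ is scale-invariant. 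Hence $\wlim_{R\to\infty}\pi_0(W(V^R_v))=\N_{\pi_r}(V_v)\,\id$ with $\N_{\pi_r}(V_v)=\exp\big(-\tfrac12 F(\dV_v,\dV_v)\big)$, and $W^\infty_{\pi_r}(V_v)=\pi_\sing\big(w(0\oplus -V_v(0,\cdot))\big)\otimes\id$; the boundary-value ``displacement'' lives entirely in the singular factor, where it is unitary for free.

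This error also leaves your verification of the multiplier condition in Assumption \ref{wlimit} unsupported: without identifying the weak limit correctly you cannot compute $\N_{\pi_r}(V_v)$ explicitly, whereas the paper evaluates it in closed form (an $\artanh$-type expression in $v\cdot t$), obtaining $1/\N_{\pi_r}(V_v)\sim\con\,(v^0)^c$ with analogous bounds on $\delta$-derivatives, which is what actually establishes that $1/\N_{\pi_r}(V_v)$ is a multiplier in $\Sc(H_+)$. Your first step (IR-regularity of $V_v$, factorization of $\pi_r(W(V_v))$, continuity of matrix elements on a total set) is in the paper's spirit, though the paper substantiates the continuity in $v$ by proving norm continuity of $v\mapsto r_h(\dV_v)$ via an explicit evaluation of $-\int\big(v/(v\cdot l)-v'/(v'\cdot l)\big)^2 d^2l$, a detail you should supply rather than assert.
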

\begin{proof}
(Assumption \ref{measurable})  To prove the measurability, it
suffices to show that $(y,\pi_r(W(V_v))x)_r$ is continuous in
$v$ for vectors from a total set, those of the form
\begin{align*}
 &x=\pi_\sing\big(w(g_1\oplus k_1)\big)\W_{\sing}\otimes \pi_0\big(W(V_1)\big)\W_0\,,\\
 &y=\pi_\sing\big(w(g_2\oplus  k_2)\big)\W_{\sing}\otimes \pi_0\big(W(V_2)\big)\W_0\,,
\end{align*}
where $V_i$, $i=1,2$, are IR-regular. As~$V_v$~is
IR-regular, so
\begin{equation}
\pi_r(W(V_v))=\pi_\sing(w(0\oplus r_h(\dV_v)))\otimes\pi_0(W(V_v))\,.\label{piWVv}
\end{equation}
One obtains:
\begin{multline}
 (y,\pi_r(W(V_v))x)_r
 =\w_\sing\big(w(g_2\oplus k_2)^*w(0\oplus r_h(\dV_v))w(g_1\oplus k_1)\big)\times\\
 \times \w_0\big(W(-V_2)W(V_{v})W(V_1)\big)\,.\label{yWVvx}
\end{multline}
From the algebraic relations it follows that:
\begin{multline}
 \w_0\big(W(-V_2)W(V_v)W(V_1)\big)=\\
 =\exp\Big[-\tfrac{1}{2}F(\dV_1-\dV_2+\dV_v,\dV_1-\dV_2+\dV_v)
 -\tfrac{i}{2}\{V_v,V_1+V_2\}-\tfrac{i}{2}\{V_1,V_2\}\Big]\,.\label{comm}
\end{multline}
Since $F(\dV_v,\dV_v)$ and $F(\dV_v,\dV_k)$ ($k=1,2$), as easily shown, are smooth in $v$, so is the r.h.s. of \eqref{comm}. Now we turn to $\w_\sing$.  Using \eqref{w1},
\eqref{w2} and \eqref{wsing}, one finds:
\begin{multline}
 \w_\sing\big(w(g_2\oplus k_2)^*w(0\oplus r_h(\dV_v))w(g_1\oplus k_1)\big)=\\
 \exp\Big[-\tfrac{1}{4}\big(\Delta g,C^{-1}\Delta g\big)_{IR}-\tfrac{1}{4}\big(\Delta k+r_h(\dV_v),
 C[\Delta k+r_h(\dV_v)]\big)_{IR}\Big]\\
 \times \exp\Big[\tfrac{i}{2}\big(r_h(\dV_v),g_1+g_2\big)_{IR}
 +\tfrac{i}{2}(g_2,k_1)_{IR}-\tfrac{i}{2}(g_1,k_2)_{IR}\Big]\,,\label{wsingx}
\end{multline}
where $\Delta g=g_1-g_2$, $\Delta k=k_1-k_2$.
To prove that the r.h.s of \eqref{wsingx} is indeed a continuous
function  in $v$, it suffices to show that terms of the form:
$\big(r_h(\dV_v),C\,r_h(\dV_v)\big)_{IR}$, $\big(r_h(\dV_v),k\big)_{IR}$,
$\big(k,C\,r_h(\dV_v)\big)_{IR}$ are continuous in $v$ for
$k\in\C_{IR}^{\infty}$. As $C$ is a bounded operator, it is
sufficient to show that $r_h(\dV_v)$, as an
element of $\Hc_{IR}$, is norm-continuous in $v$ . Since
\begin{equation}\label{rhVv}
 \tfrac{1}{2}\int \dV_v(s,l)H_t\left(\frac{s}{t\cdot l}\right)ds
 =\frac{e}{4}\int\dot{\eta}(u)H_t(u)du\,
 \Big(\frac{v}{v\cdot l}-\frac{t}{t\cdot l}\Big)=r_h(\dV_v)(l)\,,
\end{equation}
we have:
\begin{equation}
 ||r_h(\dV_v)-r_h(\dV_{v'})||_{IR}^2
 =\left(\frac{e}{4}\int \dot{\eta}(u)H_t(u)du\right)^2
 \bigg[-\int\left(\frac{v}{v\cdot l}-\frac{v'}{v'\cdot l}\right)^2 d^2l\,\bigg]\,.
\end{equation}
The last integral can be calculated explicitly:
\begin{multline}
 -\int\Big(\frac{v}{v\cdot l}-\frac{v'}{v'\cdot l}\Big)^2d^2l
 =\int\Big[2\frac{v\cdot v'}{(v\cdot l)(v'\cdot l)}
 -\frac{1}{(v\cdot l)^2}-\frac{1}{(v'\cdot
 l)^2}\Big]\,d^2l=\\[.5ex]
 =8\pi\bigg\{\frac{v\cdot v'}{\sqrt{(v\cdot v')^2-1}}
 \log\left(v\cdot v'+\sqrt{(v\cdot
 v')^2-1}\right)-1\bigg\}\label{intd2l}\,.
\end{multline}
Because \eqref{intd2l} converges to $0$ for $v\rightarrow v'$,
$r_h(\dV_v)$ is norm  continuous. Finally we can conclude that
\eqref{wsingx} is a continuous function of $v$. This ends the
proof of Assumption \ref{measurable}.
\\
\hspace*{1em}(Assumptions \ref{wlimit} and \ref{VV'}) First we
show the existence of the weak limit
\mbox{$\wlim_{R\to\infty}\pi_r(W(V^R_v))$}. The norms of
$\pi_r(W(V^R_v))$ are uniformly bounded, so it is sufficient to
obtain the weak limit for operators sandwiched between vectors
from a total set chosen as in the proof of Assumption
\ref{measurable}. We have to investigate the limit of
the expressions \eqref{comm} and \eqref{wsingx} in which $V_v$ has been replaced by $V^R_v$, for  $R\rightarrow\infty$.
From \eqref{VvR} and \eqref{fourier} one has $\ti{\dV}{}_v^R(\w,l)=\ti{\dV}_v(R\w,l)$.
As $\ti{\dV}_k(0,l)=0$, $k=1,2$, it follows by the Lebesgue dominated convergence theorem that $\lim\limits_{R\rightarrow\infty}F(\dV^{R}_{v},\dV_k)=0$ (see \eqref{FVV}), and
since $\{V^{R}_{v},V_k\}=2\,\mathfrak{Im}
\big(F(\dV^{R}_{v},\dV_k)\big)$, also
$\lim\limits_{R\rightarrow\infty}\{V^{R}_{v},V_k\}=0$.
On the other hand, by a change of the integration variable $\w$ one finds
\begin{equation}
F(\dV^{R}_{v},\dV^{R}_{v})
=F(\dV_{v},\dV_{v})\,.
\end{equation}
In this way, for the scaled version of  \eqref{comm} we obtain:
\begin{equation}
 \lim\limits_{R\rightarrow\infty}\w_0\left(W(-V_2)W(V^R_{p})W(V_1)\right)
 =\N_{\pi_r}(V_v)\,\w_0(W(-V_2)W(V_1))\,,
\end{equation}
where
\begin{equation}\label{Nmod}
 \N_{\pi_r}(V_v)=\exp\Big(\tfrac{1}{2}
 \int_{\w\geq0}\ov{\ti{\dV}_v(\w,l)}\cdot\ti{\dV}_{v}(\w,l)\frac{d\w}{\w}d^2l\Big)\,.
\end{equation}
Thus
\begin{equation}
 \wlim\limits_{R\to\infty}\pi_0(W(V^R_v))=\N_{\pi_r}(V_v)\,\id\,.
\end{equation}
For the IR-singular part we note that
\begin{equation}
 \lim_{R\to\infty}\|r_h(\dV_v^R)+V_v(0,.)\|_{IR}=0\,,
\end{equation}
which is easily shown with the use of \eqref{rhVv}. Thus using the scaled version of \eqref{wsingx} we find
\begin{equation}
 \wlim\limits_{R\to\infty}\pi_\sing(w(0\oplus r_h(\dV^R_v)))
 =\pi_\sing(w(0\oplus-V_v(0,.)))\,.
\end{equation}
Therefore, we can finally conclude that the relation
\eqref{ass2} is satisfied, with $\N_{\pi_r}$ given by
\eqref{Nmod}, and
\begin{equation*}
 W_{\pi_r}^\infty(V_v)=\pi_\sing(w(0\oplus-V_{v}(0,.\,))\otimes\id\,.
\end{equation*}
This form of these operators ensures that Assumption \ref{VV'} is satisfied. After a suitable change of  variables one finds that the factor function has the form
\begin{equation}
 \N_{\pi_r}(V_v)\,
 =\exp\left(\frac{e^2}{8}\int_{u\geq0}u|\ti{\eta}(u)|^2du\,
 \int\Big(\frac{v}{v\cdot l}-\frac{t}{t\cdot l}\Big)^2d^2l\right)\,,
\end{equation}
where $\ti{\eta}$ is the Fourier transform of $\eta$ defined as
in \eqref{fourier}. Using \eqref{intd2l} we obtain:
\begin{equation}
 \N_{\pi_r}(V_v)\,
 =\exp\bigg\{-c\bigg[\frac{v\cdot t}{\sqrt{(v\cdot t)^2-1}}
 \log\left(v\cdot t+\sqrt{(v\cdot t)^2-1}\right)-1\bigg]\bigg\}\,,
\end{equation}
where $c> 0$ is a constant. The function \mbox{$v\mapsto
\N_{\pi_r}(V_v)$} is smooth and for\linebreak $v^0\rightarrow
\infty$ we have: \mbox{$1/\N_{\pi_r}(V_v)\sim \con(v^0)^c$}, with similar estimates for derivatives.
This proves that $1/\N_{\pi_r}(V_v)$ are multipliers in
$\Sc(H_+)$.
\end{proof}

\setcounter{equation}{0}

\section{Conclusions}

The algebra proposed earlier for the description of asymptotic
fields in spinor electrodynamics incorporates Gauss' law and thus
has good chances to form (at least a substantial part of)
a~consistent model of the long-range
behavior of QED. We have found here how to give the elements of
this field algebra localization in regions contained in an
arbitrarily chosen time slice `fattening towards edges'. Compact
localization regions may be chosen only for infrared-regular
electromagnetic fields. Both infrared-singular electromagnetic
fields as well as charged fields have always localization regions
extending to spacelike infinity. However, the infrared singular
electromagnetic fields may be decomposed into fields localized in
arbitrarily `thin' fattened symmetrical spacelike cones. On the
other hand we have found that there is no way of attaching an
infrared cloud to the charged field so as to localize it in such
region, at least in a wide class of representations which satisfy
some natural general conditions. Nevertheless, we have also shown
that compactly supported observables may be formed by simple
multiplication of appropriately dressed charged fields with
compensating charges.

The lack of spacelike-cone localization of dressed Dirac fields in the present model seems to be nonstandard, as already mentioned in Introduction and Section~5. One could object that the model, although it incorporates global Gauss' law, still lacks some additional asymptotic electromagnetic variables. The construction of the model suggests that in such case the variables would have to originate as limits of gauge-dependent local electromagnetic potentials. However, whether the model is indeed incomplete can only be decided by finding its place in a formulation of fully interacting electrodynamics. In particular, it would be interesting to formulate a perturbative electrodynamics incorporating some nonperturbative infrared aspects of the present model.

On the other hand, we would like to stress once more a physically important aspect of the model considered here. Our fundamental fermion fields are genuinely charged, satisfying Gauss' law even before `dressing'. Dressing is considered for the sake of inducing a
certain localization of these fields, as well as an auxiliary step
in the construction of bi-fermion observables. Simplified as the
model is, it is at the same time non-perturbative.

This is to be contrasted with all forms of `dressing' of fermion
fields in local formulations of QED. There, in the indefinite
metric space (Gupta-Bleuler), local Dirac fields cannot carry
physical charge, as they commute with the electric flux at spatial
infinity. After constructing a perturbative solution of an initial
theory formulated in such space, one attempts then, by the
addition of Lorenz condition and nonlocal dressing of charged
fields, to restore Maxwell equations and transport the theory into
a Hilbert space of physical vector states. The dressing takes the
form of a formal local gauge transformation in which the gauge
function is constructed with the use of electromagnetic potential
(see e.g. \cite{sym}). In an Ansatz put forward by Dirac this has
the following form:
\begin{equation}\label{dress}
 \Psi(x)=\exp[ieG(x)]\psi(x)\,,\quad
 \mathcal{A}(x)=A(x)-\partial G(x)\,,
\end{equation}
where $G(x)=\int r^a(x-y)A_a(y)d^4y$; here $r^a(x)$ is a~vector
distribution satisfying $\partial_ar^a(x)=\delta^4(x)$. Within
perturbative approach to QED this idea has been implemented most
rigorously in the `axiomatic perturbative' formulation by
Steinmann~\cite{ste}. In this approach the above tentative
transformation is carried out not on the level of fields, but
rather Wightman functions. As argued by Steinmann, the results are
insensitive to a choice of a particular form of the distribution
$r^a$. And as among such distributions are some with supports in
spacelike cones, one can argue that in this way charged fields may
be pushed into such regions.

These constructions, rigorous as they are within the limits of the
procedure followed in this approach, are not without weak points.
First, not only the local interaction, but also the dressing
exponent is treated perturbatively; this is admitted by Steinmann
himself to be an obstacle to a completely reliable representation
of the infrared problems. Secondly, the dressing transformation
\eqref{dress} is infrared-singular and cannot be performed in this
form even at the level of Wightmann functions; the actual way it
is done, is via an effective spatial truncation followed by an
adiabatic limit. However, precisely these two points are of
critical importance for the infrared problem.

Finally, we want to comment on our choice of representations. One cannot exclude that the use of some more infrared singular representations would modify our results. That localization may be improved `in front of' infravacua (KPR-type representations \cite{KPR}) has been shown by Kuhnhardt \cite{Kun} in a scalar model due to Buchholz \emph{et al.} \cite{BDMRS}. One of the main motivations for the introduction of such more singular representations of free electromagnetic fields is the fact that they may be stable under the addition of radiation fields produced in scattering processes. However, in this connection we want to mention two facts on the asymptotic model considered here. First, it has been shown in \cite{her08} that representations discussed above in Section \ref{repr} do suffice to absorb radiation fields produced by a classical current. Second, in this model the \mbox{asymptotic} fields are not completely decoupled, and the electric flux at spatial infinity is due both to free as well as Coulomb parts. However, the electric flux of the total field at infinity is an invariant characteristic of the process, not changing with time (the asymptotic flux depends on the spacelike direction, but, in fact, is invariant under any finite spacetime translation of the point from which we go to spacelike infinity). This is a fact in classical theory, and should be also expected in the full quantum theory.

\section*{Appendix}

\appendix

We prove here the relation \eqref{theta}. Let $y\in R_\delta$
and $|x-y|\leq\ga$ and denote
\mbox{$\ka=\sqrt{(1+\delta^2)/(1-\delta^2)}$}, $-y^2=r^2$,
$r>0$. Then
\begin{equation*}
 |x^2-y^2|\leq|(x-y)^2|+2|y\cdot(x-y)|\leq\ga^2+2\ga\ka r\,,
\end{equation*}
where we have used \eqref{loreu}. Thus
 $-r^2-2\ga\ka r-\ga^2\leq x^2\leq-r^2+2\ga\ka r+\ga^2$.
Consider now two cases.

 (i) $y^2+R^2\geq0$ and $x^2+R^2\leq0$.\\
It follows that $R^2-r^2\geq0$ and
 $-r^2-2\ga\ka r-\ga^2+R^2\leq0$, so $r\in\<R-R_1,R\>$, with
 $R_1=\ga\ka$ (although not the whole interval is covered).

 (ii) $y^2+R^2\leq0$ and $x^2+R^2\geq0$.\\
 It follows that $R^2-r^2\leq0$ and
  $-r^2+2\ga\ka r+\ga^2+R^2\geq0$, so $r\in\<R,R+R_2\>$, with
 $R_2=\ga(\ka+\sqrt{\ka^2+1})$ (with the same remark as above).

Summarizing, we have that from $(x^2+R^2)(y^2+R^2)\leq0$ it
follows \mbox{$-(R+R_2)^2\leq y^2\leq-(R-R_1)^2$} for $R\geq
R_1$, which implies \eqref{theta}.

\end{document}